\tikzstyle{overbrace text style}=[font=\tiny, above, pos=.5, yshift=5pt]
\tikzstyle{overbrace style}=[decorate,decoration={brace,raise=5pt,amplitude=3pt}]
\definecolor{cadmiumgreen}{rgb}{0.0, 0.42, 0.24}
\newtheorem{theorem}{Theorem}[section]
\newtheorem{corollary}[theorem]{Corollary}
\newtheorem{lemma}[theorem]{Lemma}
\theoremstyle{definition}
\newcommand{\dist}{\mathtt{dist}}
\newcommand{\MAX}{\text{\normalfont \texttt{MAX}}}
\newcommand{\AVG}{\text{\normalfont \texttt{AVG}}}
\begin{document}

\allowdisplaybreaks

\title{\bf The Distortion of Distributed Metric Social Choice}

\author[1]{Elliot Anshelevich}
\author[2]{Aris Filos-Ratsikas}
\author[3]{Alexandros A. Voudouris}

\affil[1]{Department of Computer Science, Rensselaer Polytechnic Institute, USA}
\affil[2]{Department of Computer Science, University of Liverpool, UK}
\affil[3]{School of Computer Science and Electronic Engineering, University of Essex, UK}

\renewcommand\Authands{ and }
\date{}

\maketitle

\begin{abstract}
We consider a social choice setting with agents that are partitioned into disjoint groups, and have metric preferences over a set of alternatives. Our goal is to choose a single alternative aiming to optimize various objectives that are functions of the distances between agents and alternatives in the metric space, under the constraint that this choice must be made in a distributed way: The preferences of the agents within each group are first aggregated into a representative alternative for the group, and then these group representatives are aggregated into the final winner. Deciding the winner in such a way naturally leads to loss of efficiency, even when complete information about the metric space is available. We provide a series of (mostly tight) bounds on the distortion of distributed mechanisms for variations of well-known objectives, such as the (average) total cost and the maximum cost, and also for new objectives that are particularly appropriate for this distributed setting and have not been studied before. 
\end{abstract}

\section{Introduction} \label{sec:intro}
The main goal of social choice theory~\citep{sen1986social} is to come up with outcomes that accurately reflect the collective opinions of individuals within a society. A prominent example is that of elections, where the preferences of voters over different candidates are aggregated into a single winner, or a set of winners in the case of committee elections. Besides elections, the abstract social choice theory setting, where a set of \emph{agents} express preferences over a set of possible \emph{alternatives} captures very broad decision-making application domains, such as choosing public policies, allocations of resources, or the most appropriate position to locate a public facility.

In the field of computational social choice, \citet{procaccia2006distortion} defined the notion of \emph{distortion} to measure the loss in an aggregate cardinal objective (typically the utilitarian social welfare), due to making decisions whilst having access to only limited ({\em ordinal}, in particular) information about the preferences of the agents, rather than their true cardinal values (or costs). Following their work, a lot of effort has been put forward to bound the distortion of social choice rules, with \citet{anshelevich2018approximating} and \citet{anshelevich2017randomized} being the first to considered settings with \emph{metric} preferences. In such settings, agents and alternatives are points in a metric space, and thus the distances between them (which define the costs of the agents) satisfy the triangle inequality. The metric space can be high-dimensional, and can be thought of as evaluating the proximity between agents and alternatives for different political issues or ideological axes (e.g., liberal to conservative, or libertarian to authoritarian). The distortion in metric social choice has received significant attention, with many variants of the main setting being considered over the recent years.  

In contrast to the {\em centralized} decision-making settings considered in the papers mentioned above, there are cases where it is logistically too difficult to aggregate the preferences of the agents directly, or different groups of agents play inherently different roles in the process. In such scenarios, the collective decisions have to be carried out in a \emph{distributed} manner, as follows. The agents are partitioned into groups (such as electoral districts, focus groups, or sub-committees), and the members of each group locally decide a single alternative that is representative of their preferences, without taking into account the agents of different groups. Then, the final outcome is decided based on properties of the group representatives, and not on the underlying agents within the groups; for example, the representatives act as agents themselves and choose an outcome according to their own preferences. However, since the representatives cannot perfectly capture all the information about the preferences of the agents (even when it is available in the group level), it is not surprising that choosing the final outcome this way may lead to loss of efficiency.

Motivated by this, \citet{FMV2020distributed} initiated the study of the deterioration of the social welfare in general normalized distributed social choice settings. They extended the notion of distortion to account for the information about the agents' preferences that is lost after the local decision step, and showed bounds on the distortion of max-weight mechanisms when the number of groups is given. Very recently, \citet{FV2020facility} considered the distributed distortion problem with metric preferences, and showed tight bounds on the distortion of (cardinal and ordinal) mechanisms under several restrictions: Their bounds (a) apply only to the real line metric, (b) concern the (average) social cost objective (the total distance between agents and the chosen alternative), and (c) are mainly limited to groups of agents of the same size. 

In this paper, we extend the results of \citet{FV2020facility} in all three axes: We provide bounds for (a) general metrics (including refined bounds for the line metric), (b) four different objectives (including the average total cost, the maximum cost, as well as two new objectives that are tailor-made and clearly motivated by the distributed nature of the setting), and (c) groups of agents that could vary in size. We paint an almost complete picture of the distortion landscape of distributed mechanisms when the agents have metric preferences.

\subsection{Our Contributions}
We consider a distributed, metric social choice setting with a set of agents and a set of alternatives, all of whom are located in a metric space. The preferences of the agents for the alternatives are given by their distances in the metric space, and as such they satisfy the triangle inequality. Furthermore, the agents are partitioned into a given set of {\em districts} of possibly different sizes. A distributed mechanism selects an alternative based on the preferences of the agents in two steps: first, each district selects a \emph{representative} alternative using some local aggregation rule, and then the mechanism uses \emph{only} information about the representatives to select the final winning alternative. 

The goal is to choose the alternative that optimizes some aggregate objective that is a function of the distances between agents and alternatives. In the main part of the paper we consider the following four cost minimization objectives, which can be defined as compositions of objectives applied over and within the districts: 
\begin{itemize}
\item The {\em average of the average agent distance in each district}\footnote{Note that this objective is not exactly equivalent to the well-known (average) {\em social cost} objective, defined as the (average) total agent distance over all districts. All of our results extend for this objective as well, by adapting our mechanisms to weigh the representatives proportionally to the district sizes. When all districts have the same size, $\AVG \circ \AVG$ coincides with the average social cost.} (denoted by $\AVG \circ \AVG$); 
\item The {\em average of the maximum agent distance in each district} (denoted by $\AVG \circ \MAX$);
\item The {\em maximum agent distance in any district} (denoted by $\MAX \circ \MAX$); 
\item The {\em maximum of the average agent distance in each district} (denote by $\MAX \circ \AVG$).
\end{itemize}
While $\AVG \circ \AVG$ and $\MAX \circ \MAX$ are adaptations of objectives that have been considered in the centralized setting, $\AVG \circ \MAX$ and $\MAX \circ \AVG$ are only meaningful in the context of distributed social choice. In particular, $\MAX \circ \AVG$  can be thought of as a fairness-inspired objective guaranteeing that no district has a very large cost, where the cost of a district is the average cost of its members. Similarly, $\AVG \circ \MAX$ guarantees that the average district cost is small, where the cost of a district is now defined as the egalitarian (maximum) cost of any of its members. We consider the introduction and study of these objectives as one of the major contributions of our work.  

We measure the performance of a distributed mechanism by its {\em distortion}, defined as the worst-case ratio (over all instances of the problem) between the objective value of the alternative chosen by the mechanism and the minimum possible objective value achieved over all alternatives. The distortion essentially measures the deterioration of the objective due to the fact that the mechanism must make a decision via a distributed two-step process, on top of other possible informational limitations related to the preferences of the agents. We consider {\em deterministic} mechanisms that are either {\em cardinal} (in which case they have access to the exact distances between agents and alternatives), or {\em ordinal} (in which case they have access only to the rankings that are induced by the distances). Table~\ref{table:results} gives an overview of our bounds on the distortion of distributed mechanisms, for the four objectives defined above. We provide bounds that hold for general metric spaces, and then more refined bounds for the fundamental special case where the metric is a line.

\begin{table}[htb]
\centering
\begin{tabular}{c||c|c|c|c|}
\cline{2-5}
                              & \multicolumn{2}{c|}{General metric}        & \multicolumn{2}{c|}{Line metric}           \\ \cline{2-5} 
                              & Cardinal & Ordinal & Cardinal & Ordinal  \\ \hline \hline
\multicolumn{1}{|c||}{$\AVG\circ \AVG$} & $3$*                & $[7$*$, 11]$              & $3$*                & $7$*               \\ \hline 
\multicolumn{1}{|c||}{$\AVG\circ \MAX$} & $3$                & $[2+\sqrt{5},11]$              & $3$                & $[2+\sqrt{5},5]$               \\ \hline
\multicolumn{1}{|c||}{$\MAX\circ \MAX$} & $[1+\sqrt{2},3]$             & $[3,5]$               & $1+\sqrt{2}$           & $3$               \\ \hline
\multicolumn{1}{|c||}{$\MAX\circ \AVG$} & $[1+\sqrt{2},3]$             & $[2+\sqrt{5},5]$            & $1+\sqrt{2}$       & $[2+\sqrt{5},5]$            \\ \hline
\end{tabular}
\caption{An overview of the distortion bounds for the various settings studied in this paper. Each entry consists of an interval showing a lower bound on the distortion of all distributed mechanisms for the corresponding setting, and an upper bound that is achieved by some mechanism; when a single number is presented, the bound is tight. The results marked with a (*) for the line metric and the $\AVG\circ \AVG$ objective (as well as the corresponding lower bounds for general metrics) follow from the work of \citet{FV2020facility}; all other results in the table were not known previously.} 
\label{table:results} 
\end{table}

Several of our bounds for general metric spaces are based on a novel composition technique for designing distributed mechanisms. In particular, we prove a rather general {\em composition theorem}, which appears in many versions throughout our paper, depending on the objective at hand. Roughly speaking, the theorem relates the distortion of a distributed mechanism to the distortion of the centralized voting rules it uses for the local (in-district) and global (over-districts) aggregation steps. In particular, for two such voting rules with distortion bounds $\alpha$ and $\beta$, the distortion of the composed mechanism is at most $\alpha+\beta+\alpha \beta$. This effectively enables us to plug in voting rules with known distortion bounds, and obtain distributed mechanisms with low distortion. The  theorem is also robust in the sense that the $\AVG$ and $\MAX$ objectives can be substituted with more general objectives satisfying specific properties, such as monotonicity and subadditivity; we provide more details on that in Section \ref{sec:generalization}.

To demonstrate the strength of the composition theorem, consider the objective $\AVG \circ \AVG$. The upper bound of $3$ for cardinal mechanisms and general metrics is obtained by using optimal centralized voting rules (with distortion $1$) for both aggregation steps. Similarly, the upper bound of $11$ for ordinal mechanisms follows by using the ordinal \textsc{PluralityMatching} rule of \citet{gkatzelis2020resolving} in both steps of the distributed mechanism; this rule is known to have distortion at most $3$ for general instances, and at most $2$ when all agents are at distance $0$ from their most-preferred alternative (which is the case when the representatives are thought of as agents in the second step of the mechanism).

Even though the composition theorem is evidently a very powerful tool, it comes short of providing tight bounds in some cases. To this end, we design explicit mechanisms with improved distortion guarantees, both for general metrics as well as the fundamental special case where the metric is a line. A compelling highlight of our work is a novel mechanism for objectives of the form $\MAX \circ G$, to which we refer as $\lambda$-\textsc{Acceptable-Rightmost-Leftmost} ($\lambda$-ARL). While this mechanism has the counter-intuitive property of {\em not} being \emph{unanimous} (i.e., there are cases where all agents agree on the best alternative, but the mechanism does not choose this alternative as the winner), it achieves the best possible distortion of $1+\sqrt{2}$ among all distributed mechanisms on the line. In contrast, we prove that unanimous mechanisms cannot achieve distortion better than $3$. To the best of our knowledge, this is the first time that not satisfying unanimity turns out to be a necessary ingredient for achieving the best possible distortion in the metric social choice literature. 

\subsection{Related Work}
The distortion of centralized social choice voting rules has been studied extensively for many different settings. For a comprehensive introduction to the distortion literature, we refer the interested reader to the recent survey of~\citet{survey}.

After the work of~\citet{procaccia2006distortion}, a series of papers adopted their normalized setting, where the agents have unit-sum values for the alternatives, and proved asymptotically tight bounds on the distortion of ordinal single-winner rules~\citep{caragiannis2011embedding,boutilier2015optimal}, multi-winner rules~\citep{caragiannis2017subset}, rules that choose rankings of alternatives~\citep{benade2019rankings}, and strategyproof rules~\citep{bhaskar2018truthful}. 
Recent papers considered more general questions related to how the distortion is affected by the amount of available information about the values of the agents~\citep{mandal2019thrifty,mandal2020optimal,amanatidis2020peeking}. The normalized distortion has also been investigated in other related problems, such as participatory budgeting~\citep{benade2017participatory}, and one-sided matching~\citep{filos-Ratsikas2014matching,amanatidis2021matching}.

The metric distortion setting was first considered by \citet{anshelevich2018approximating} who, among many results, showed a lower bound of $3$ on the distortion of deterministic single-winner ordinal rules for the social cost, and an upper bound of $5$, achieved by the Copeland rule. Following their work, many papers were devoted to bridging this gap (e.g., see \citep{munagala2019improved,kempe2020duality}) until, finally, \citet{gkatzelis2020resolving} designed the {\sc PluralityMatching} rule that achieves an upper bound of $3$; in fact, this bound holds for the more general {\em fairness ratio}~\citep{goel2017metric} (which captures various different objectives, including the social cost and the maximum cost). Besides the main setting, many other works have shown bounds on the metric distortion for randomized rules~\citep{anshelevich2017randomized,feldman2016voting}, rules that use less than ordinal information about the preferences of the agents~\citep{fain2019random,kempe2020communication,anagnostides2021metric}, committee elections~\citep{chen2020favorite,jaworski20evaluating}, primary elections~\citep{borodin2019primaries}, 
and for many other problems~\citep{abramowitz2017utilitarians,anshelevich2018ordinal}.

Most related to our work are the recent papers of~\citet{FMV2020distributed} and~\citet{FV2020facility}, who initiated the study of the distortion in distributed normalized and metric social choice settings, respectively. As already previously discussed, we improve the results of \citet{FV2020facility} by extending them to hold for general metrics and asymmetric districts, and also show bounds for many other objectives. In our terminology, \citeauthor{FV2020facility} showed a tight bound of $3$ for cardinal distributed mechanisms and a tight distortion of $7$ for ordinal mechanisms, when the metric is a line, the districts have the same size, and the objective is the social cost (which is equivalent to our $\AVG \circ \AVG$ objective when the districts are symmetric). Interestingly, not only do we generalize these results to hold for asymmetric districts and other objectives, but our composition theorem also provides easier proofs, compared to the characterizations of worst-case instances used in their paper.


\section{Preliminaries} \label{sec:prelim}
An instance of our problem is defined as a tuple $I=(N, A, D, \delta)$, where
\begin{itemize}
\item $N$ is a set of $n$ {\em agents}.

\item $A$ is a set of $m$ {\em alternatives}.

\item $D$ is a collection of $k$ {\em districts}, which define a partition of $N$ (i.e., each agent belongs to a single district). Let $N_d$ be the set of agents that belong to district $d \in D$, and denote by $n_d = |N_d|$ the size of $d$. 

\item $\delta$ is a {\em metric space} that contains points representing the agents and the alternatives. In particular, $\delta$ defines a {\em distance} $\delta(i,j)$ between any $i, j \in N \cup A$, such that the {\em triangle inequality} is satisfied, i.e., $\delta(i,j) \leq \delta(i,x) + \delta(x,j)$ for every $i,j,x \in N \cup A$. 
\end{itemize}
A {\em distributed} mechanism takes as input information about the metric space, which can be of cardinal or ordinal nature (e.g., agents and alternatives could specify their exact distances between them, or the linear orderings that are induced by the distances), and outputs a single winner alternative $w \in A$ by implementing the following two steps:
\begin{itemize}
\item Step 1: For every district $d \in D$, the agents therein decide a {\em representative} alternative $y_d \in A$.
\item Step 2: Given the district representatives, the output is an alternative $w \in A$.
\end{itemize}
In both steps, the decisions are made by using {\em direct voting rules}, which map the preferences of a given subset of agents to an alternative. To be more specific, in the first step, an {\em in-district} direct voting rule is applied for each district $d \in D$ with input the preferences of the agents in the district (set $N_d$) to decide its representative $y_d \in A$. Then, in the second step, the district representatives can be thought of as pseudo-agents, and an {\em over-districts} direct voting rule is applied with input their preferences to decide the final winner $w \in A$. In the special case of instances consisting of a single district, the process is not distributed, and thus the two steps collapse into one: The final winner is the alternative chosen to be the district's representative.  

\subsection{Objectives}
We consider standard minimization objectives that have been studied in the related literature, and also propose new ones that are appropriate in the context distributed setting. Each objective assigns a value to every alternative as a cost function composition $F \circ G$ of an objective function $F$ that is applied over the districts and an objective function $G$ that is applied within the districts. Our main four objectives are defined by considering all possible combinations of $F, G \in \{\AVG,\MAX\}$, where the functions $\AVG$ and $\MAX$ define an average and a max over districts or agents within a district, respectively. In particular, we have:
\begin{itemize}
\item 
The $\AVG \circ \AVG$ of an alternative $j \in A$ is defined as 
$$(\AVG \circ \AVG)(j | I) = \frac{1}{k} \sum_{d \in D} \bigg( \frac{1}{n_d} \sum_{i \in N_d} \delta(i,j) \bigg).$$

\item 
The $\MAX \circ \MAX$ of an alternative $j \in A$ is defined as 
$$(\MAX \circ \MAX)(j | I) = \max_{d \in D} \max_{i \in N_d} \delta(i,j) = \max_{i \in N} \delta(i,j).$$

\item 
The $\AVG \circ \MAX$ of an alternative $j \in A$ is defined as 
$$(\AVG \circ \MAX)(j | I) = \frac{1}{k} \sum_{d \in D} \max_{i \in N_d} \delta(i,j).$$

\item
The $\MAX \circ \AVG$ of an alternative $j \in A$ is defined as 
$$(\MAX \circ \AVG)(j | I) = \max_{d \in D} \bigg\{ \frac{1}{n_d} \sum_{i \in N_d} \delta(i,j) \bigg\}.$$
\end{itemize}
The $\AVG \circ \AVG$ objective is similar to the well-known utilitarian {\em average social cost} objective measuring the average total distance between all agents and alternative $j$; actually, $\AVG \circ \AVG$ coincides with the average social cost when the districts are symmetric (i.e., have the same size), but not in general. The $\MAX \circ \MAX$ objective coincides with the egalitarian {\em max cost} measuring the maximum distance from $j$ among all agents. The new objectives $\AVG \circ \MAX$ and $\MAX \circ \AVG$ make sense in the context of distributed voting, and can be thought of as measures of fairness between districts. For example, minimizing the $\MAX \circ \AVG$ objective corresponds to making sure that the final choice treats each district fairly so that the average social cost of each district is almost equal to that of any other district. Of course, besides combinations of $\AVG$ and $\MAX$, one can define many more objectives; we consider such generalizations in Section~\ref{sec:generalization}.

\subsection{Distortion of voting rules and distributed mechanisms}
Direct voting rules can be suboptimal, especially when they have limited access to the metric space (for example, when ordinal information is known about the preferences of the agents over the alternatives). This inefficiency is typically captured in the related literature by the notion of distortion, which is the worst-case ratio between the objective value of the optimal alternative over the objective value of the alternative chosen by the rules. Formally, given a minimization cost objective $F \in \{\AVG,\MAX\}$, the {\em $F$-distortion} of a voting rule $V$ is
\begin{align*}
\dist_F(V) = \sup_{I=(N,A,\delta)} \frac{F(V(I)|I)}{\min_{j \in A} F(j|I)},    
\end{align*}
where $V(I)$ denotes the alternative chosen by the voting rule when given as input the (single-district) instance $I$ consisting of a set of agents $N$, a set of alternatives $A$, and a metric space $\delta$.

The notion of distortion can be naturally extended for the case of distributed mechanisms. Given a composition objective $F\circ G$,  the {\em $(F\circ G)$-distortion} of a distributed mechanism $M$ is 
\begin{align*}
\dist_{F \circ G}(M) = \sup_{I = (N,A,D,\delta)} \frac{(F \circ G)(M(I)|I)}{\min_{j \in A} (F \circ G)(j|I)},    
\end{align*}
where $M(I)$ is the alternative chosen by the mechanism when given as input an instance $I$ consisting of a set of agents $N$, a set of alternatives $A$, a set $D$ of districts, and a metric space $\delta$. Our goal is to bound the distortion of distributed mechanisms for the different objectives we consider. To this end, we will either show how known results from the literature about the distortion of direct voting rules can be composed to yield distortion bounds for distributed mechanisms, or design explicit mechanisms with low distortion.


\section{Composition Results for General Metric Spaces} \label{sec:metric}
In this section we consider general metric spaces, and show how known distortion bounds for direct voting rules can be composed to yield distortion bounds for distributed mechanisms that rely on those voting rules. Given an objective $F \circ G$, we say that a distributed mechanism is {\em $\alpha$-in-$\beta$-over} if it works as follows. \smallskip

\begin{tcolorbox}[title={$\alpha$-in-$\beta$-over mechanisms},colback=white]
\begin{enumerate}[left=-1pt]
\item For each district $d \in D$, choose its representative using an in-district voting rule with $G$-distortion at most $\alpha$.
\item Choose the final winner using an over-districts voting rule with $F$-distortion at most $\beta$.
\end{enumerate}
\end{tcolorbox}

Our first technical result is an upper bound on the $(F \circ G)$-distortion of $\alpha$-in-$\beta$-over mechanisms, for any $F, G \in \{\AVG, \MAX\}$. The proof of the following theorem also follows from the more general Theorem~\ref{thm:general-composition} in Section~\ref{sec:generalization1}, which considers objectives that are compositions of functions satisfying particular properties, such as monotonicity and subadditivity. 

\begin{theorem} \label{thm:main-composition}
For any $F, G \in \{\AVG, \MAX\}$, the $(F \circ G)$-distortion of any $\alpha$-in-$\beta$-over mechanism is at most $\alpha+\beta+\alpha\beta$.
\end{theorem}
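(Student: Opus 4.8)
The plan is to fix an arbitrary instance $I = (N, A, D, \delta)$, let $w = M(I)$ be the winner returned by the mechanism, let $o \in \arg\min_{j \in A}(F \circ G)(j|I)$ be an optimal alternative, and let $y_d$ denote the representative chosen in district $d$ during Step~1. The whole argument rests on four elementary properties shared by both $F$ and $G$ when $F,G \in \{\AVG, \MAX\}$: monotonicity, positive homogeneity (nonnegative constants scale out), subadditivity ($F_d(a_d + b_d) \le F_d(a_d) + F_d(b_d)$, with equality when $F=\AVG$), and translation by a district-independent constant ($G_i(\delta(i,y_d) + c) = G_i(\delta(i,y_d)) + c$). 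These let me manipulate the nested objective almost as if it were linear, up to subadditive slack.

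First I would decouple the two aggregation steps via the triangle inequality. For a fixed district $d$ and agent $i \in N_d$ we have $\delta(i, w) \le \delta(i, y_d) + \delta(y_d, w)$; since $\delta(y_d, w)$ does not depend on $i$, applying $G$ and using translation gives $G_i(\delta(i,w)) \le G_i(\delta(i, y_d)) + \delta(y_d, w)$. Applying $F$ over districts and invoking subadditivity then yields $(F \circ G)(w | I) \le F_d\big(G_i(\delta(i,y_d))\big) + F_d(\delta(y_d, w))$, splitting the bound into an \emph{in-district error} term and an \emph{over-districts error} term.

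Next I would bound each term against $(F\circ G)(o|I)$. For the first term, the $\alpha$-bound on the in-district rule gives $G_i(\delta(i, y_d)) \le \alpha\, G_i(\delta(i, o))$ in each district; monotonicity and homogeneity of $F$ then give $F_d\big(G_i(\delta(i, y_d))\big) \le \alpha\,(F\circ G)(o|I)$. For the second term, the $\beta$-bound on the over-districts rule, applied to the second-step instance in which the pseudo-agents $\{y_d\}$ incur cost $\delta(y_d, \cdot)$, gives $F_d(\delta(y_d, w)) \le \beta\, F_d(\delta(y_d, o))$.

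The crux, and the step I expect to be the main obstacle, is bounding $F_d(\delta(y_d, o))$ — the cost the optimal alternative incurs in the pseudo-agent instance — back in terms of the true optimum $(F\circ G)(o|I)$, since the representatives $y_d$ are not themselves among the original agents. Here I would again invoke the triangle inequality, $\delta(y_d, o) \le \delta(i, y_d) + \delta(i, o)$, and reduce it — by averaging over $i \in N_d$ when $G = \AVG$, and by bounding each summand on the right by its district-maximum (for an arbitrary fixed $i$) when $G = \MAX$ — to the uniform inequality $\delta(y_d, o) \le G_i(\delta(i, y_d)) + G_i(\delta(i, o))$. Applying $F$ with subadditivity, together with the in-district bound from the previous paragraph, gives $F_d(\delta(y_d, o)) \le (\alpha + 1)(F\circ G)(o | I)$; this is precisely where the mixed term $\alpha\beta$ is born, as relating the representatives to the true agents costs a multiplicative factor $\alpha$. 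Substituting everything back, $(F\circ G)(w|I) \le \alpha(F\circ G)(o|I) + \beta(\alpha+1)(F\circ G)(o|I) = (\alpha + \beta + \alpha\beta)(F\circ G)(o|I)$, which is the claimed bound.
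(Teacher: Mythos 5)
Your proof is correct and takes essentially the same route as the paper's: the same triangle-inequality decomposition of $(F\circ G)(w|I)$ into an in-district term and an over-districts term, the same application of the $\alpha$- and $\beta$-distortion guarantees, and the same second triangle inequality bounding $\delta(y_d,o)$ by $G_i(\delta(i,y_d)) + G_i(\delta(i,o))$, which is exactly where the paper's $\alpha\beta$ term also arises. Your abstract treatment via monotonicity, homogeneity, subadditivity, and translation in effect reconstructs the paper's more general composition result (Theorem~\ref{thm:general-composition}), from which the stated theorem also follows, so the two arguments coincide.
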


\begin{proof}
Here, we present a proof only for the $\AVG \circ \AVG$ objective; the proof for the other objectives is similar. Consider an arbitrary $\alpha$-in-$\beta$-over mechanism $M$ and an arbitrary instance $I=(N, A, D, \delta)$. Let $w$ be the alternative that $M$ outputs as the final winner when given $I$ as input, and denote by $o$ an optimal alternative. By the definition of $M$, we have the following two properties:
\begin{align}\label{eq:SS-alpha-in-district}
 \forall j \in A, d \in D: \sum_{i \in N_d} \delta(i, y_d) \leq \alpha \sum_{i \in N_d} \delta(i, j)
\end{align}
and
\begin{align}\label{eq:SS-beta-over-districts}
 \forall j \in A: \sum_{d \in D} \delta(y_d, w) \leq \beta \sum_{d \in D} \delta(y_d, j)
\end{align}
By the triangle inequality, we have $\delta(i,w) \leq \delta(y_d, w) + \delta(i, y_d)$ for any agent $i \in N$.
Using this, 
we obtain
\begin{align*}
(\AVG \circ \AVG)(w|I) 
&= \frac{1}{k} \sum_{d \in D} \left( \frac{1}{n_d} \sum_{i \in N_d} \delta(i,w) \right) \\
&\leq \frac{1}{k} \sum_{d \in D} \left( \frac{1}{n_d} \sum_{i \in N_d} \delta(y_d, w) \right) + \frac{1}{k} \sum_{d \in D} \left( \frac{1}{n_d} \sum_{i \in N_d} \delta(i, y_d) \right) \\
&= \frac{1}{k} \sum_{d \in D} \delta(y_d, w) + \frac{1}{k} \sum_{d \in D} \left( \frac{1}{n_d} \sum_{i \in N_d} \delta(i, y_d) \right).
\end{align*}
By \eqref{eq:SS-alpha-in-district} and \eqref{eq:SS-beta-over-districts} for $j=o$, we obtain
\begin{align*}
(\AVG \circ \AVG)(w|I)
&\leq \beta \cdot \frac{1}{k} \sum_{d \in D} \delta(y_d,o)   + \alpha \cdot \frac{1}{k} \sum_{d \in D} \left( \frac{1}{n_d} \sum_{i \in N_d} \delta(i,o) \right) \nonumber \\
&= \beta \cdot \frac{1}{k}\sum_{d \in D} \left( \frac{1}{n_d} \sum_{i \in N_d} \delta(y_d,o) \right)  + \alpha \cdot (\AVG \circ \AVG)(o|I).
\end{align*}
By the triangle inequality, we have $\delta(y_d,o)  \leq \delta(i,y_d) + \delta(i,o)$ for any agent $i \in N$.
Using this and \eqref{eq:SS-alpha-in-district} for $j=o$, we can upper bound the first term of the last expression above as follows:
\begin{align*}
\beta \cdot \frac{1}{k} \sum_{d \in D} \left( \frac{1}{n_d} \sum_{i \in N_d} \delta(y_d,o) \right)
&\leq \beta \cdot \frac{1}{k} \sum_{d \in D} \left( \frac{1}{n_d} \sum_{i \in N_d} \delta(i,y_d) \right) + \beta \cdot \frac{1}{k} \sum_{d \in D} \left( \frac{1}{n_d} \sum_{i \in N_d} \delta(i,o) \right) \\
&\leq(\beta+\alpha \beta) \cdot \frac{1}{k} \sum_{d \in D} \left( \frac{1}{n_d} \sum_{i \in N_d} \delta(i,o) \right) \\
&= (\beta+\alpha \beta) \cdot (\AVG \circ \AVG)(o|I).
\end{align*}
Putting everything together, we obtain a distortion upper bound of $\alpha+\beta+\alpha\beta$.
\end{proof}

By applying Theorem~\ref{thm:main-composition} using known results, we can derive bounds on the $(F \circ G)$-distortion of distributed mechanisms, for any $F, G \in \{\AVG, \MAX\}$. In particular, due to the structure of $F$ and $G$, if the whole metric space is known (that is, we have access to the exact distances between agents and alternatives), then we can easily compute the alternative that optimizes $F$ and $G$. In other words, there exist direct voting rules with $F$- and $G$-distortion $1$, which can be used to obtain a $1$-in-$1$-over distributed mechanism, and the following statement.

\begin{corollary}\label{cor:main-cardinal}
For any $F, G \in \{\AVG, \MAX\}$, there exists a cardinal distributed mechanism with $(F \circ G)$-distortion at most $3$.
\end{corollary}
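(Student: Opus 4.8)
The plan is to exhibit a concrete $1$-in-$1$-over mechanism and then invoke \Cref{thm:main-composition} with $\alpha = \beta = 1$, which immediately yields a distortion bound of $1 + 1 + 1 \cdot 1 = 3$. The only thing that needs to be verified is that, in the cardinal setting, voting rules with $G$-distortion $1$ (for the in-district step) and $F$-distortion $1$ (for the over-districts step) genuinely exist for every choice of $F, G \in \{\AVG, \MAX\}$.

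First I would construct the in-district rule. Fix a district $d$ and recall that the $G$-distortion of a voting rule $V$ is $\sup_I G(V(I) \mid I) / \min_{j \in A} G(j \mid I)$, where $I$ ranges over single-district instances on $N_d$. In the cardinal setting the rule has access to the exact distances $\delta(i, j)$ for all $i \in N_d$ and $j \in A$, so it can directly evaluate $G(j \mid I)$ for every alternative $j$ (a finite computation: a sum or a maximum of known distances) and return a minimizer $y_d \in \arg\min_{j \in A} G(j \mid I)$. By construction the numerator equals the denominator on every instance, so this rule has $G$-distortion exactly $1$.

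The over-districts rule is handled identically: treating the representatives $\{y_d\}_{d \in D}$ as pseudo-agents, a cardinal rule can compute $F$ evaluated over these pseudo-agents for each alternative and return the minimizer, giving $F$-distortion $1$. Composing these two rules yields a $1$-in-$1$-over mechanism, and \Cref{thm:main-composition} completes the argument.

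I do not anticipate a genuine obstacle, since full cardinal access trivializes the local optimization at each of the two steps; the entire content of the statement is already packaged inside the composition theorem. The one point worth stating carefully is that the in-district and over-districts guarantees are each measured against the optimum of the relevant \emph{local} problem (the agents of a district in the first step, the representatives in the second), not against the global optimum of $F \circ G$ — and the bridge between these local optima and the global objective is precisely the triangle-inequality bookkeeping carried out in the proof of \Cref{thm:main-composition}.
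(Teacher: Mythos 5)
Your proposal is correct and follows exactly the paper's own route: the paper likewise observes that full cardinal access yields in-district and over-districts voting rules with $G$- and $F$-distortion $1$ (by directly computing the minimizing alternative), giving a $1$-in-$1$-over mechanism whose distortion is at most $1+1+1\cdot 1 = 3$ by \Cref{thm:main-composition}. Your closing remark about the local-versus-global optimum distinction being bridged by the composition theorem is an accurate reading of where the real work lies.
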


If only ordinal information is available about the distances between agents and alternatives, then we can employ the {\sc PluralityMatching} voting rule of \citet{gkatzelis2020resolving} both within and over the districts. This rule is known to achieve the best possible distortion of $3$ among all ordinal rules, for any $F, G \in \{\AVG, \MAX\}$. In fact, this rule achieves a distortion bound of $2$ when all agents are at distance $0$ from their top alternative; this is the case when the agents are a subset of the alternatives as in the second step of a distributed mechanism.\footnote{See Theorem 1 and Proposition 6 in the arxiv version of the paper of \citet{gkatzelis2020resolving}.} Hence, we have a $3$-in-$2$-over mechanism, and Theorem~\ref{thm:main-composition} yields the following statement. 

\begin{corollary}\label{cor:main-ordinal}
For any $F, G \in \{\AVG, \MAX\}$, there exists an ordinal distributed mechanism with $(F \circ G)$-distortion at most $11$.
\end{corollary}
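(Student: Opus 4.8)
The plan is to instantiate the composition theorem (Theorem~\ref{thm:main-composition}) with concrete values of $\alpha$ and $\beta$ drawn from the known distortion guarantees of a single ordinal voting rule, namely the {\sc PluralityMatching} rule of \citet{gkatzelis2020resolving}. The entire corollary will then reduce to an arithmetic substitution, so the real work lies in justifying that the two distortion parameters we plug in are legitimate for the in-district and over-districts steps respectively.

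First I would recall the two facts about {\sc PluralityMatching} that I intend to use. It is an ordinal rule whose $F$-distortion is at most $3$ for every $F \in \{\AVG, \MAX\}$ on arbitrary metric instances; moreover, on instances where every agent sits at distance $0$ from its top-ranked alternative, its distortion improves to $2$. With these in hand, I would define the candidate mechanism: run {\sc PluralityMatching} as the in-district rule in Step~1 and again as the over-districts rule in Step~2.

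The key step — and the one that deserves care — is to argue that this mechanism is $3$-in-$2$-over. For Step~1 the agents of each district are arbitrary points of $\delta$, so only the general guarantee applies and the in-district $G$-distortion is at most $\alpha = 3$. For Step~2, however, the pseudo-agents are precisely the district representatives $y_d$, which are themselves alternatives and hence coincide with points of $A$ in the metric $\delta$. Each such pseudo-agent therefore ranks itself first and lies at distance $0$ from its top choice, so the improved guarantee applies and the over-districts $F$-distortion is at most $\beta = 2$. The main thing to verify here is that feeding the representatives to {\sc PluralityMatching} really does fall under the hypothesis of the improved bound — i.e., that the ordinal preferences of the pseudo-agents are induced by the genuine metric $\delta$ restricted to $\{y_d : d \in D\} \cup A$ — which holds because the representatives are actual points of the space rather than abstract summaries.

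Finally I would apply Theorem~\ref{thm:main-composition} with $\alpha = 3$ and $\beta = 2$, yielding an $(F \circ G)$-distortion of at most $\alpha + \beta + \alpha\beta = 3 + 2 + 6 = 11$, which is the claimed bound. No further estimation is needed, and I anticipate no obstacle beyond the bookkeeping of the distance-$0$ condition discussed above.
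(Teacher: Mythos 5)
Your proposal is correct and follows exactly the paper's own argument: run {\sc PluralityMatching} both in-district and over-districts, note that the representatives are themselves alternatives (hence at distance $0$ from their top choice, giving the improved bound of $2$ over districts versus $3$ within), and apply Theorem~\ref{thm:main-composition} with $\alpha=3$, $\beta=2$ to get $3+2+6=11$. Your extra care in verifying that the pseudo-agents' ordinal preferences are induced by the genuine metric restricted to the representatives is a welcome precision of the same reasoning the paper relegates to a footnote.
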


Corollaries~\ref{cor:main-cardinal} and~\ref{cor:main-ordinal} demonstrate the power that Theorem~\ref{thm:main-composition} gives us in designing distributed mechanisms with constant distortion upper bounds, by using known results from the literature as black boxes. However, this method does not always lead to the best possible distributed mechanisms. In particular, let us consider the objectives $\MAX \circ G$, for $G \in \{\AVG,\MAX\}$ and the class of ordinal mechanisms. We can improve upon the upper bound of $11$ due to Corollary~\ref{cor:main-ordinal} using the following, much simpler mechanisms. \\

\begin{tcolorbox}[title={$\alpha$-in-arbitrary-over mechanisms},colback=white]
\begin{enumerate}[left=-1pt]
\item For each district $d \in D$, choose its representative using an ordinal in-district voting rule with $G$-distortion at most $\alpha$.
\item Output an arbitrary representative as the final winner.
\end{enumerate}
\end{tcolorbox}

\begin{theorem} \label{thm:alpha-in-arbitrary-over}
For any $G \in \{\AVG,\MAX\}$, the $(\MAX \circ G)$-distortion of any $\alpha$-in-arbitrary-over mechanism is at most $2+\alpha$. 
\end{theorem}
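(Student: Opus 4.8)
The plan is to reduce everything to per-district costs. For a district $d$ and alternative $j$, write $\cost^G_d(j) = G\big(\{\delta(i,j) : i \in N_d\}\big)$, so that $\cost^{\AVG}_d(j) = \frac{1}{n_d}\sum_{i \in N_d}\delta(i,j)$ and $\cost^{\MAX}_d(j) = \max_{i \in N_d}\delta(i,j)$; in both cases $(\MAX \circ G)(j|I) = \max_{d \in D}\cost^G_d(j)$. Fix an $\alpha$-in-arbitrary-over mechanism $M$ and an instance $I$, let $o$ be optimal for $\MAX \circ G$, and set $\opt = (\MAX \circ G)(o|I)$. The winner is $w = y_{d^*}$ for some district $d^*$, and the only guarantee I may use from the first step is the in-district bound $\cost^G_d(y_d) \le \alpha \cdot \cost^G_d(o)$ for every $d$, which holds because the in-district rule has $G$-distortion at most $\alpha$ and $o$ is a feasible (not necessarily optimal) alternative for each district.

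Two elementary inequalities, both immediate from the triangle inequality and valid for $G \in \{\AVG, \MAX\}$, will do all the work. First, a Lipschitz-type bound: for any alternatives $a,b$ and any district $d$, $\cost^G_d(a) \le \cost^G_d(b) + \delta(a,b)$; this follows by applying $\delta(i,a) \le \delta(i,b) + \delta(a,b)$ to each agent $i \in N_d$ and then averaging (for $\AVG$) or taking the maximum (for $\MAX$). Second, a diameter bound that converts in-district cost into control on the distance from the winner to $o$: $\delta(y_{d^*}, o) \le \cost^G_{d^*}(y_{d^*}) + \cost^G_{d^*}(o)$, which follows because $\delta(y_{d^*},o) \le \delta(y_{d^*},i) + \delta(i,o)$ for every $i \in N_{d^*}$ (again averaged or maximized over $N_{d^*}$, using that $N_{d^*} \neq \emptyset$).

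Combining these, I would first bound the distance from the winner to the optimal alternative: using the diameter bound followed by the in-district guarantee in district $d^*$,
\begin{align*}
\delta(y_{d^*}, o) \le \cost^G_{d^*}(y_{d^*}) + \cost^G_{d^*}(o) \le (1+\alpha)\cost^G_{d^*}(o) \le (1+\alpha)\,\opt.
\end{align*}
Then, for every district $d$, the Lipschitz bound applied to $w = y_{d^*}$ and $o$ gives $\cost^G_d(w) \le \cost^G_d(o) + \delta(y_{d^*}, o) \le \opt + (1+\alpha)\opt = (2+\alpha)\opt$, where I used $\cost^G_d(o) \le \opt$. Taking the maximum over $d$ yields $(\MAX \circ G)(w|I) = \max_d \cost^G_d(w) \le (2+\alpha)\opt$, which is exactly the claimed bound.

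The calculations are short, so the crux is conceptual: recognizing that the only per-district guarantee available for the arbitrarily chosen winner is the in-district bound in its own district $d^*$, and that the outer $\MAX$ structure lets the Lipschitz inequality propagate this single-district control to every other district at an additive cost of $\delta(y_{d^*}, o)$. The main thing to verify carefully is that both auxiliary inequalities hold uniformly for $\AVG$ and $\MAX$ — the averaging step for $\AVG$ and the maximization step for $\MAX$ — so that one argument covers both objectives simultaneously.
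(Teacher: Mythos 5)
Your proof is correct and follows essentially the same route as the paper's: bound $(\MAX \circ G)(w|I) \le (\MAX \circ G)(o|I) + \delta(w,o)$ by the triangle inequality, then control $\delta(w,o) \le (1+\alpha)\cdot(\MAX \circ G)(o|I)$ using the in-district distortion guarantee in the winner's own district. The only difference is presentational: your abstract per-district cost $\cost^G_d$ handles $\AVG$ and $\MAX$ uniformly, whereas the paper writes out only the $\MAX \circ \AVG$ case and notes the other is similar.
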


\begin{proof}
We will present a proof only for the $\MAX \circ \AVG$ objective; the proof for $\MAX \circ \MAX$ follows by similar, even simpler arguments. Consider any $\alpha$-in-arbitrary-over mechanism $M$ and any instance $I=(N,A,D,\delta)$. Let $w$ be the alternative chosen by $M$ when given as input the ordinal information of $I$, and denote the optimal alternative by $o$. In addition, let $d^* \in \arg\max_{d \in D} \frac{1}{n_d} \sum_{i \in N_d} \delta(i,w)$ be the district that gives the max cost for $w$. By the triangle inequality, and since 
$(\MAX \circ \AVG)(o|I) \geq \frac{1}{n_d} \sum_{i \in N_d} \delta(i,o)$ for every $d \in D$, we have
\begin{align*}
(\MAX \circ \AVG)(w|I) 
&= \frac{1}{n_{d^*}} \sum_{i \in N_{d^*}} \delta(i,w) \\
&\leq \frac{1}{n_{d^*}} \sum_{i \in N_{d^*}} \bigg( \delta(i,o) +  \delta(w,o) \bigg) \\
&\leq (\MAX \circ \AVG)(o|I) + \delta(w,o)
\end{align*}
Now, let $d_w$ be the district whose representative is $w$. By the triangle inequality and the fact that the in-district voting rule used to choose $w$ as the representative of $d_w$ has $\AVG$-distortion at most $\alpha$, we obtain
\begin{align*}
\delta(w,o) 
&\leq \frac{1}{n_{d_w}} \sum_{i \in N_{d_w}} \bigg( \delta(i,o) + \delta(i,w) \bigg) \\
&\leq (1+\alpha) \cdot \frac{1}{n_{d_w}} \sum_{i \in N_{d_w}} \delta(i,o) \\
&\leq (1+\alpha) \cdot (\MAX \circ \AVG)(o|I).
\end{align*}
Putting everything together, we obtain a distortion upper bound of $2+\alpha$.
\end{proof}

Now, using again the voting rule of \citet{gkatzelis2020resolving} with $\AVG$- and $\MAX$-distortion at most $3$ as an in-district rule, we obtain a $3$-in-arbitrary-over distributed mechanism, and Theorem~\ref{thm:alpha-in-arbitrary-over} implies the following result.

\begin{corollary}\label{cor:5-max-of-G-ordinal}
For any $G \in \{\AVG,\MAX\}$, there exists an ordinal distributed mechanism with $(\MAX \circ G)$-distortion at most $5$.
\end{corollary}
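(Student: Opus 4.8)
The plan is to invoke Theorem~\ref{thm:alpha-in-arbitrary-over} as a black box and simply supply a concrete in-district voting rule with the required $G$-distortion. The corollary is an immediate instantiation: Theorem~\ref{thm:alpha-in-arbitrary-over} guarantees that any $\alpha$-in-arbitrary-over mechanism has $(\MAX \circ G)$-distortion at most $2 + \alpha$, so it suffices to exhibit an ordinal $\alpha$-in-arbitrary-over mechanism with $\alpha = 3$ for each $G \in \{\AVG, \MAX\}$, yielding the bound $2 + 3 = 5$.

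First I would recall that the {\sc PluralityMatching} rule of \citet{gkatzelis2020resolving} is an ordinal voting rule whose distortion is at most $3$ for the objectives of interest. As noted in the paragraph preceding Corollary~\ref{cor:main-ordinal}, this rule achieves distortion at most $3$ for both the $\AVG$ and the $\MAX$ objectives (in fact for the more general fairness ratio, which subsumes both). Consequently, using {\sc PluralityMatching} as the in-district rule in each district $d \in D$ produces a mechanism whose first step selects a representative $y_d$ with $G$-distortion at most $3$, for either choice of $G$. The second step of the mechanism outputs an arbitrary representative, which requires no information beyond the first step and is trivially ordinal.

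This construction is therefore an ordinal $3$-in-arbitrary-over mechanism. Applying Theorem~\ref{thm:alpha-in-arbitrary-over} with $\alpha = 3$ gives a $(\MAX \circ G)$-distortion of at most $2 + 3 = 5$ for each $G \in \{\AVG, \MAX\}$, which is exactly the claim.

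There is essentially no obstacle here, since the corollary is a plug-in of a known distortion guarantee into an already-proved theorem; the only point requiring care is to confirm that the cited rule indeed attains distortion $3$ simultaneously for $\AVG$ and $\MAX$, rather than only for the social cost. This is precisely the content of the fairness-ratio result of \citet{gkatzelis2020resolving}, which I would cite to close the argument. Unlike the $11$ bound of Corollary~\ref{cor:main-ordinal}, this route is simpler because the second step is arbitrary and contributes only the additive $2$ from the triangle-inequality argument in Theorem~\ref{thm:alpha-in-arbitrary-over}, rather than incurring a multiplicative over-districts factor.
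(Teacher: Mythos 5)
Your proposal is correct and matches the paper's own argument exactly: the paper likewise instantiates Theorem~\ref{thm:alpha-in-arbitrary-over} with $\alpha=3$ by using the \textsc{PluralityMatching} rule of \citet{gkatzelis2020resolving} (whose fairness-ratio guarantee gives distortion at most $3$ for both $\AVG$ and $\MAX$) as the in-district rule, yielding a $3$-in-arbitrary-over mechanism with distortion at most $2+3=5$. Your additional care in verifying that the bound of $3$ holds simultaneously for both objectives, rather than only for the social cost, is precisely the point the paper also relies on.
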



\section{Improved Results on the Line Metric} \label{sec:line}

In this section we focus  on the line metric, where both the agents and the alternatives are assumed to be points on the line of real numbers. Exploiting this structure, there are classes of mechanisms for which we can obtain significantly improved bounds compared to those implied by the general composition Theorem~\ref{thm:main-composition}, as well as Theorem~\ref{thm:alpha-in-arbitrary-over}. 

\subsection{Ordinal Mechanisms}
We start with ordinal distributed mechanisms and the two objectives $\AVG \circ G$ for $G \in \{\AVG, \MAX\}$. Recall that Corollary~\ref{cor:main-ordinal} implies a distortion bound of at most $11$ for these objectives. However, when the metric is a line, we can do much better by observing that there is an ordinal direct over-districts voting rule with $\AVG$-distortion of $1$. In particular, we can identify the median district representative and choose it as the final winner. Using the rule of \citet{gkatzelis2020resolving} as the direct in-district voting rule, we obtain a distributed $3$-in-$1$-over mechanism with distortion at most $7$ due to Theorem~\ref{thm:main-composition}. 

\begin{corollary}\label{cor:SUM-G-line}
When the metric is a line, there exists an ordinal distributed mechanism with $(\AVG \circ G)$-distortion at most $7$, for any $G \in \{\AVG, \MAX\}$. 
\end{corollary}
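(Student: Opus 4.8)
The plan is to instantiate the $\alpha$-in-$\beta$-over framework with $\alpha = 3$ and $\beta = 1$ and then invoke Theorem~\ref{thm:main-composition}. For the objective $\AVG \circ G$, the outer function $F = \AVG$ is the over-districts function, so the composition theorem yields a distortion bound of $\alpha + \beta + \alpha\beta = 3 + 1 + 3 = 7$. Thus it suffices to exhibit (i) an ordinal in-district voting rule with $G$-distortion at most $3$ for both $G \in \{\AVG, \MAX\}$, and (ii) an ordinal over-districts voting rule with $\AVG$-distortion exactly $1$ on the line.

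For the in-district rule, I would use the {\sc PluralityMatching} rule of \citet{gkatzelis2020resolving}, which is ordinal and has distortion at most $3$ with respect to the fairness ratio; since this ratio dominates both the $\AVG$ (social cost) and the $\MAX$ objectives, the rule has $G$-distortion at most $3$ for either choice of $G$. This supplies the ``in'' factor $\alpha = 3$ on general metrics, and a fortiori on the line.

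The crux is the over-districts rule. Here the district representatives $\{y_d\}_{d \in D}$ act as equally weighted pseudo-agents, and the $\AVG$ objective of a candidate winner $w$ is $\frac{1}{k}\sum_{d \in D} \delta(y_d, w)$. The rule I would use simply outputs the \emph{median} representative along the line. The argument that this achieves $\AVG$-distortion $1$ has two parts. First, on the line the median of a finite point set minimizes the sum of distances to that set, and a minimizer can always be taken to be one of the points themselves, namely the median representative $y_{d^\star}$; since $y_{d^\star} \in A$, it is optimal not merely among the representatives but among all of $A$, so the $\AVG$-distortion is exactly $1$. Second, the rule is implementable with purely ordinal information: on the line, metric preferences are single-peaked, so from the rankings one can recover the left-to-right ordering of the alternatives (up to reflection, which does not affect the median), and since each representative coincides with an alternative, this determines the order of the $y_d$ along the line and hence pins down the median by counting.

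Putting these together yields a $3$-in-$1$-over ordinal mechanism, and Theorem~\ref{thm:main-composition} delivers the bound of $7$. I expect the only genuinely delicate point to be the ordinal implementability of the median in part (ii): one must argue that the left-to-right order needed to locate the median representative is indeed recoverable from the ordinal profile on the line, and that the \emph{unweighted} median is the correct target precisely because every district contributes equally to the outer $\AVG$. Everything else is a direct assembly of known results through the composition theorem.
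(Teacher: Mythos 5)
Your proposal is correct and follows essentially the same route as the paper: the paper also forms a $3$-in-$1$-over mechanism by using the \textsc{PluralityMatching} rule of \citet{gkatzelis2020resolving} in-district and the median district representative over-districts, then invokes Theorem~\ref{thm:main-composition} to get $3+1+3\cdot 1 = 7$. The only difference is that the paper simply asserts that the median representative is an ordinal over-districts rule with $\AVG$-distortion $1$, whereas you spell out the justification (median optimality on the line and recoverability of the order from the single-peaked rankings), which the paper leaves implicit.
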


Corollary~\ref{cor:SUM-G-line} essentially recovers the tight distortion bound of $7$ by \citet{FV2020facility} for $\AVG \circ \AVG$ when the districts are symmetric, and also extends it to the case of asymmetric districts. For $\AVG \circ \MAX$, this bound of $7$ is a first improvement, but we can do even better with the following distributed mechanism. \\

\begin{tcolorbox}[title={\sc Arbitrary-Median},colback=white]
\begin{enumerate}[left=-1pt]
    \item For every district $d \in D$, choose its representative to be the favorite alternative of an arbitrary agent $j_d \in N_d$.
    \item Output the median representative as the final winner.
\end{enumerate}
\end{tcolorbox}

\begin{theorem} \label{thm:arbitrary-median}
When the metric is a line, {\sc Arbitrary-Median} has $(\AVG \circ \MAX)$-distortion at most $5$.
\end{theorem}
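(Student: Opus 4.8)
The plan is to bound the cost of the chosen winner $w$ by the optimal cost plus the displacement $\delta(w,o)$ of $w$ from an optimal alternative $o$, and then to show that the median structure forces this displacement to be at most four times the optimum. Write $\opt := (\AVG \circ \MAX)(o|I)$. First I would apply the triangle inequality in each district: for every $d \in D$ and every $i \in N_d$ we have $\delta(i,w) \le \delta(i,o) + \delta(w,o)$, so $\max_{i \in N_d}\delta(i,w) \le \max_{i \in N_d}\delta(i,o) + \delta(w,o)$. Averaging over districts gives $(\AVG\circ\MAX)(w|I) \le \opt + \delta(w,o)$, so it suffices to establish $\delta(w,o) \le 4\,\opt$.

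Next I would normalize coordinates so that $o$ sits at the origin and, using the symmetry of the line, assume without loss of generality that $w > 0$ (the case $w=0$ being trivial). Since $w$ is the median of the $k$ representatives, at least $k/2$ of them lie at coordinates $\ge w$; let $S$ be the set of districts whose representative satisfies $y_d \ge w$, so that $|S| \ge k/2$. The heart of the argument is to show that each district in $S$ is already expensive for $o$. Fix $d \in S$ and recall that $y_d$ is the favorite (nearest) alternative of the designated agent $j_d \in N_d$. Because $o$ is itself an available alternative, $j_d$ weakly prefers $y_d$ to $o$, i.e. $\delta(j_d,y_d) \le \delta(j_d,o)$; on the line this places $j_d$ on the $y_d$ side of the midpoint of the segment joining $o=0$ and $y_d$, whence $j_d \ge y_d/2 \ge w/2$. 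Consequently $\max_{i \in N_d}\delta(i,o) \ge \delta(j_d,o) = j_d \ge w/2$.

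Finally I would sum these per-district lower bounds over $S$: since $\opt = \tfrac1k\sum_{d \in D} \max_{i\in N_d}\delta(i,o) \ge \tfrac1k\sum_{d\in S}\tfrac{w}{2} \ge \tfrac1k \cdot \tfrac{k}{2}\cdot\tfrac{w}{2} = \tfrac{w}{4}$, we get $\delta(w,o) = w \le 4\,\opt$. Combining with the reduction from the first paragraph yields $(\AVG\circ\MAX)(w|I) \le \opt + 4\,\opt = 5\,\opt$, the claimed bound.

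The hard part will be the per-district step, namely translating the purely combinatorial fact ``the representative $y_d$ lies weakly to the right of the median winner $w$'' into the metric fact ``district $d$ is costly for $o$''. The delicate point is that $y_d$ encodes only the single agent $j_d$'s preference, so the argument must use precisely that $y_d$ is $j_d$'s \emph{favorite} relative to the competing alternative $o$ in order to pin $j_d$ at coordinate $\ge w/2$; the rest is routine triangle-inequality bookkeeping together with the median counting property. I would also verify that ``at least $k/2$ representatives lie at coordinates $\ge w$'' holds under the median convention regardless of the parity of $k$.
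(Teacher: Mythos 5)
Your proof is correct and follows essentially the same route as the paper's: the triangle-inequality reduction to bounding $\delta(w,o)$, the median property giving at least $k/2$ representatives on the far side of $w$ from $o$, and the observation that each such district's dictator agent $j_d$ must lie at distance at least $\delta(w,o)/2$ from $o$, yielding $\delta(w,o)\leq 4\cdot(\AVG\circ\MAX)(o|I)$. The only cosmetic difference is that you derive the key fact from $j_d$ preferring $y_d$ to $o$ (via coordinates), whereas the paper phrases it as $j_d$ preferring $w$ to $o$; the content is identical.
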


\begin{proof}
Consider any instance $I=(N,A,D,\delta)$, in which $\delta$ is a line metric. Let $w$ be the alternative chosen by {\sc Arbitrary-Median} when given as input (the ordinal information of) $I$, and denote by $o$ the optimal alternative. For every district $d \in D$, let $i_d = \arg\max_{i \in N_d} \delta(i,w)$ be the agent in $d$ that gives the max cost for $w$. By the triangle inequality, and since $(\AVG \circ \MAX)(o|I) \geq \frac{1}{k} \sum_{d \in D}\delta(i_d,o)$, we have
\begin{align*}
(\AVG \circ \MAX)(w|I) 
&= \frac{1}{k} \sum_{d \in D} \delta(i_d,w) \\
&\leq \frac{1}{k} \sum_{d \in D} \bigg( \delta(i_d,o) + \delta(w,o) \bigg) \\
&\leq (\AVG \circ \MAX)(o|I) + \delta(w,o). 
\end{align*}
Without loss of generality, we can assume that $w$ is to the left of $o$ on the line. Since $w$ is the median representative, there is a set $S$ of at least $k/2$ districts whose representatives are to the left of (or coincide with) $w$. As a result, for every district $d \in S$, agent $j_d$ (whose favorite alternative becomes the representative of $d$) prefers $w$ over $o$, and thus $d(j_d,o) \geq \delta(w,o)/2$. Using this, we obtain
\begin{align*}
&(\AVG \circ \MAX)(o|I) = \frac{1}{k} \sum_{d \in D} \max_{i \in N_d} \delta(i,o) 
\geq \frac{1}{k} \sum_{d \in S} \delta(j_d,o)
\geq \frac{1}{2} \cdot \frac{\delta(w,o)}{2} \\
&\Leftrightarrow
\delta(w,o) \leq 4 \cdot (\AVG \circ \MAX)(o|I).
\end{align*}
Putting everything together, we obtain a distortion upper bound of $5$. 
\end{proof}

Next, we show an almost matching lower bound of approximately $4.23$. Before going through with the proof, we argue that ordinal distributed mechanisms with finite distortion must satisfy a unanimity property, which will be used extensively in all our lower bound constructions. Formally, we say that a distributed mechanism is {\em unanimous} if it chooses the representative of a district $d$ to be an alternative $a$ whenever all agents in $N_d$ prefer $a$ over all other alternatives.

\begin{lemma} \label{lem:ordinal-unanimous}
For any $F, G \in \{\AVG,\MAX\}$, every ordinal distributed mechanism with finite $(F \circ G)$-distortion must be unanimous.
\end{lemma}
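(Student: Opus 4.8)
The plan is to prove the contrapositive: I will show that any ordinal distributed mechanism $M$ that is \emph{not} unanimous has infinite $(F \circ G)$-distortion for every $F,G \in \{\AVG,\MAX\}$. So suppose $M$ violates unanimity. By definition there is a district, i.e.\ a set of agents $N_d$ together with an ordinal profile $P$ in which every agent ranks some alternative $a$ first, such that the in-district rule of $M$ applied to $P$ outputs a representative $b \neq a$. Since the in-district rule depends only on the preferences of the agents in the district, I may place this district inside a \emph{single-district} instance; there the two aggregation steps collapse and the winner of $M$ is exactly $b$. Note that for a single district all four objectives reduce to one of only two quantities: the average agent distance to the winner (for $\AVG\circ\AVG$ and $\MAX\circ\AVG$) or the maximum agent distance to the winner (for $\AVG\circ\MAX$ and $\MAX\circ\MAX$). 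Hence the same construction will settle all four cases simultaneously.

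Next I would build, for every $\epsilon \in (0,1/2)$, a metric $\delta_\epsilon$ that \emph{realizes the profile $P$} and in which every agent lies within distance $\epsilon$ of $a$, while $b$ sits at distance $1$ from $a$. Given such a metric, the cost of $a$ under either surviving objective is at most $\epsilon$, whereas the winner $b$ has cost at least $\delta_\epsilon(i,b) \geq \delta_\epsilon(a,b) - \delta_\epsilon(i,a) \geq 1 - \epsilon > 1/2$ for every agent $i$, by the triangle inequality. Because $M$ is ordinal and the profile $P$ is identical across the whole family, $M$ outputs $b$ on every instance, so its distortion is at least $(1-\epsilon)/\epsilon$, which tends to infinity as $\epsilon \to 0$. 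This contradicts finite distortion and completes the argument.

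The crux, and the step I expect to be the main obstacle, is constructing $\delta_\epsilon$: I must realize an \emph{arbitrary} linear-order profile whose common top alternative is $a$, while forcing every agent arbitrarily close to $a$. The idea is a high-dimensional Euclidean embedding. Place $a$ at the origin of $\RR^m$ and each of the other alternatives $x$ at a distinct unit coordinate vector $e_x$ (so each, including $b$, lies at distance $1$ from $a$), and place agent $i$ at the perturbed point $\epsilon w_i$ for a \emph{unit} vector $w_i$ to be chosen. Expanding $\|e_x - \epsilon w_i\|^2 = 1 - 2\epsilon (w_i)_x + \epsilon^2$ shows that, among the non-$a$ alternatives, agent $i$ ranks them in decreasing order of the coordinates $(w_i)_x$; choosing these coordinates distinct and decreasing along $i$'s preference order (and then normalizing, which preserves the order) therefore realizes $i$'s ranking of the other alternatives. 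Since $\delta_\epsilon(i,a) = \epsilon$ while $\delta_\epsilon(i,x) \to 1$, for all sufficiently small $\epsilon$ the alternative $a$ is every agent's closest point and all the required strict comparisons hold, so $\delta_\epsilon$ realizes exactly $P$. This confirms the existence of the desired family of instances and hence the claimed blow-up in distortion.
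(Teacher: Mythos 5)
Your proof is correct and follows essentially the same route as the paper's: extract a single-district instance from the non-unanimity witness (so the two steps collapse and the winner is $b \neq a$), then place all agents arbitrarily close to $a$ so that $b$'s cost dwarfs $a$'s, yielding unbounded $(F \circ G)$-distortion. The only difference is one of technical care, not of structure: the paper simply co-locates all agents with $a$ (cost exactly $0$, implicitly relying on equidistant placement or tie-breaking to keep the rankings below $a$ consistent), whereas you realize the profile with strict inequalities via an $\epsilon$-perturbed Euclidean embedding and let $\epsilon \to 0$.
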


\begin{proof}
Suppose towards a contradiction that there is an ordinal distributed mechanism $M$ with finite $(F \circ G)$-distortion that is not unanimous. Consider an instance $I$ with a single district consisting of agents all of whom prefer alternative $a$ to all other alternatives. Since $M$ is not unanimous, it chooses some different alternative $b$ to be the representative of the district, and thus the overall winner (as there is only a single district). However, if $a$ and all agents are positioned at the same point in the metric space, we have that $(F\circ G)(a|I)=0$ and $(F\circ G)(b|I) > 0$, and the distortion of $M$ is unbounded, a contradiction.
\end{proof}

We are now ready to present the lower bound on the $(\AVG \circ \MAX)$-distortion of ordinal distributed mechanisms.

\begin{theorem} \label{thm:avg-of-max-lower}
The $(\AVG \circ \MAX)$-distortion of any ordinal distributed mechanism is at least $2+\sqrt{5}-\varepsilon$, for any $\varepsilon > 0$, even when the metric is a line.
\end{theorem}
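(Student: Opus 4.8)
The plan is to exhibit a single family of instances that are pairwise indistinguishable to any ordinal mechanism, and then argue that no matter which alternative the mechanism outputs, some member of the family witnesses distortion at least $2+\sqrt5-\varepsilon$. Concretely, I would fix one abstract ordinal profile (the full rankings of all agents inside every district, together with the rankings induced at the representatives) and realize it by several different line embeddings. Since the mechanism is deterministic and reads only ordinal information, it chooses the same representative in each district and the same final winner across the whole family; moreover, by \cref{lem:ordinal-unanimous} any finite-distortion mechanism is unanimous, which lets me pin down each representative to a specific alternative whenever I make a district's agents agree on their top choice. The distortion is then bounded below by the worst cost ratio, over the embeddings, for the (fixed) alternative the mechanism selects.

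The first structural point I would establish is that two alternatives do not suffice: with only a winner $w$ and an optimum $o$, the tension ``an alternative is cheap for $\AVG \circ \MAX$ only if every district has all its agents close to it, but then those agents rank it first and it becomes that district's representative'' already caps the ratio at $3$. Indeed, a short calculation of the form I sketch below, optimized over the fraction $\mu$ of districts whose representative is $w$, tops out at $\mu=1/2$ with ratio $3$. To break past this I would introduce a third alternative whose only purpose is to \emph{hide} the true optimum: the family is made symmetric among the non-optimal ``trap'' alternatives so that the over-districts step cannot distinguish them, which forces the mechanism into a trap while the genuinely cheap alternative is realized differently in a sibling instance.

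The quantitative heart of the construction exploits the per-district $\MAX$ together with the over-districts $\AVG$. In the ``forcing'' districts (those whose representative is a trap) the agents must lie past the midpoint between the optimum and that trap, so each such district charges the optimum at least half the optimum--trap distance; in the remaining districts I place the agents just beyond the optimum, on the side \emph{away} from the winner, so that their $\MAX$ cost to the winner is essentially the full optimum--winner distance while their $\MAX$ cost to the optimum is negligible. Writing $D=\delta(w,o)$ and introducing a free parameter for the split and the precise placement, the winner's cost approaches $\opt+D$ while $D/\opt$ is driven as high as the symmetry constraints permit. Imposing that the two symmetric ways the mechanism can err yield equal ratios produces the balance equation $x^{2}-4x-1=0$, whose positive root is exactly $2+\sqrt5$; the $\varepsilon$ absorbs a limiting argument in the number of districts together with an arbitrarily small perturbation of the midpoints so that all preferences are strict (this also rules out ``free'' correct tie-breaking by the over-districts rule).

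The main obstacle is not the optimization but the simultaneous satisfaction of three competing requirements when building the family: (i) the complete ordinal profile --- every agent's ranking and every representative's ranking --- must be identical across all embeddings, even though the identity of the optimal alternative changes between them; (ii) the third alternative must keep the true optimum ordinally unattractive to the representatives, so the mechanism cannot simply output it, while the over-districts profile stays symmetric among the traps; and (iii) the placements that make a trap expensive (far agents on the far side of the optimum) must coexist with the placements that force that trap to be a representative (agents past the midpoint). Requirements (i)--(iii) pull directly against maximizing $D/\opt$, and it is precisely their reconciliation that prevents reaching the upper bound of $5$ from \cref{thm:arbitrary-median} and pins the achievable value at $2+\sqrt5$. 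I would therefore spend most of the effort designing the embeddings and verifying ordinal indistinguishability, after which the bound follows by evaluating $(\AVG \circ \MAX)$ on the two witnesses and reading off the positive root of the quadratic.
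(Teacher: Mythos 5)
Your proposal does not establish the theorem: the entire load-bearing construction is deferred. You state requirements (i)--(iii), acknowledge that they ``pull directly against'' the quantity you want to maximize, and then say you would ``spend most of the effort designing the embeddings and verifying ordinal indistinguishability.'' That reconciliation is exactly the hard part, and it is never exhibited. Worse, the structural premise that drives your design is false. You claim that two alternatives cap the ratio at $3$, so a third ``trap'' alternative is needed; but the paper's proof reaches $2+\sqrt{5}-\varepsilon$ with exactly two alternatives $a,b$ on a line. Your cap argument rests on the implication ``all agents of a district are close to the cheap alternative $\Rightarrow$ they rank it first $\Rightarrow$ it becomes the representative,'' and that implication is wrong: being close to an alternative in absolute distance does not mean ranking it first. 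The paper's final instance exploits precisely this, using mixed districts (agents at $-1/2$ and $1/2$, with $a$ at $0$ and $b$ at $1$) whose agents are all within distance $1/2$ of $a$, yet whose representative is provably forced to be $b$.

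The second gap is your insistence on a \emph{single} ordinal profile realized by several embeddings. That framework is unnecessarily weak here, and it is what pushes you toward the unconstructed three-alternative family. Since a deterministic ordinal mechanism is a fixed function, and since its second step sees \emph{only} the representatives, its decisions on one input transfer to every instance containing that input: the paper first forces the in-district rule to give mixed districts the representative $b$ (otherwise a two-district embedding yields ratio $5$), then forces the over-districts rule to output $b$ whenever $x$ representatives are $a$ and $y$ are $b$ (otherwise an embedding with singleton districts yields ratio $1+2y/x$), and finally builds an instance of unanimous-$a$ districts plus mixed districts that inherits both forced decisions and is punished by an embedding where $b$'s ratio is $3+2x/y$. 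Balancing $1+2y/x$ against $3+2x/y$ at $y/x=(1+\sqrt{5})/2$ gives exactly your quadratic $r^2-4r-1=0$, so your target value and balancing idea are right; but without either this chaining-across-instances idea or a concrete family meeting your conditions (i)--(iii), the proposal does not prove the bound.
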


\begin{proof}
Suppose towards a contradiction that there is an ordinal distributed mechanism $M$ with distortion strictly smaller than $2+\sqrt{5}-\varepsilon$, for any $\varepsilon > 0$. We will reach a contradiction by defining instances with two alternatives $a$ and $b$, and districts consisting of the same size. Without loss of generality, we assume that $M$ chooses alternative $a$ as the final winner when given as input any instance with only two districts, such that both alternatives are representative of some district. 
Let $x$ and $y$ be two integers such that $\phi > y/x \ge\phi - \varepsilon/2$, where $\phi = (1+\sqrt{5})/2$ is the golden ratio.

First, consider an instance $I_1$ consisting of the following two districts:
\begin{itemize}
\item The first district consists of two agents, such that one of them prefers alternative $a$ and the other prefers alternative $b$. 
\item The second district consists of two agents, such that both of them prefer alternative $b$. Due to unanimity (Lemma~\ref{lem:ordinal-unanimous}), the representative of this district must be $b$.  
\end{itemize}
Suppose that $M$ chooses $a$ as the representative of the first district, in which case both alternatives are representative of some district, and thus $M$ chooses $a$ as the final winner. Consider the following metric:
\begin{itemize}
\item Alternative $a$ is positioned at $0$, and alternative $b$ is positioned at $1$.
\item In the first district, the agent that prefers alternative $a$ is positioned at $1/2$, whereas the agent that prefers alternative $b$ is positioned at $3/2$. 
\item In the second district, both agents are positioned at $1$.
\end{itemize}
Then, we have that $(\AVG \circ \MAX)(a|I_1) = \frac{1}{2}\left( \frac{3}{2} + 1 \right) = 5/4$ and $(\AVG \circ \MAX)(b|I_2) = \frac{1}{2} \left(\frac{1}{2}+0\right) = 1/4$, leading to a distortion of $5$. Consequently, $M$ must choose $b$ as the representative of the first district (where one agent prefers $a$ and one prefers $b$). 

Next, we argue that for instances with $x+y$ districts such that $a$ is the representative of $x$ districts and $b$ is the representative of $y$ districts, $M$ must choose $b$ as the final winner. Assume otherwise that $M$ chooses $a$ in such a situation, and consider an instance $I_2$ consisting of the following $x+y$ districts:
\begin{itemize}
\item Each of the first $x$ districts consists of a single agent that prefers alternative $a$; thus, $a$ is the representative of all these districts.
\item Each of the next $y$ districts consists of a single agent that prefers alternative $b$; thus, $b$ is the representative of all these districts.
\end{itemize}
By assumption, $M$ chooses $a$ as the final winner. Consider the following metric:
\begin{itemize}
\item Alternative $a$ is positioned at $0$ and alternative $b$ is positioned at $1$.
\item In each of the first $x$ districts, the agent therein is positioned at $1/2$.
\item In each of the next $y$ districts, the agent therein is positioned at $1$.
\end{itemize}
Since 
$$(\AVG \circ \MAX)(a|I_2) = \frac{1}{x+y} \left(\frac{x}{2} + y\right) = \frac{1}{x+y} \cdot \frac{x+2y}{2}$$ 
and 
$$(\AVG \circ \MAX)(b|I_2) = \frac{1}{x+y} \left(\frac{x}{2} + 0 \right) = \frac{1}{x+y} \cdot \frac{x}{2},$$ 
the distortion is 
$$\frac{x+2y}{x} = 1 + 2\frac{y}{x} \ge 1+2\phi - \varepsilon = 2+\sqrt{5} - \varepsilon.$$ 
Consequently, $M$ must choose $b$, whenever there are $x+y$ districts such that $a$ is the representative of $x$ districts and $b$ is the representative of the remaining $y$ districts.

Finally, consider an instance $I_3$ with the following $x+y$ districts:
\begin{itemize}
\item Each of the first $x$ districts consists of two agents that prefer alternative $a$. Due to unanimity, $a$ must be the representative of all these districts.
\item Each of the next $y$ districts consists of two agents, such that one of them prefers alternative $a$, while the other prefers alternative $b$. By the discussion above (about instance $I_1$), $b$ must be the representative of these districts.
\end{itemize}
Since $a$ is the representative of $x$ districts and $b$ is the representative of $y$ districts, by the discussion above (about instance $I_2$), $M$ chooses $b$ as the final winner. Consider the following metric:
\begin{itemize}
\item Alternative $a$ is positioned at $0$ and alternative $b$ is positioned at $1$.
\item In each of the first $x$ districts, the two agents therein are positioned at $0$.
\item In each of the next $y$ districts, the agent that prefers $a$ is positioned at $-1/2$, whereas the agent that prefers $b$ is positioned at $1/2$.
\end{itemize}
Since 
$$(\AVG \circ \MAX)(a|I_3) = \frac{1}{x+y} \left( 0 + \frac{y}{2} \right) = \frac{1}{x+y} \cdot \frac{y}{2}$$
and 
$$(\AVG \circ \MAX)(b|I_2) = \frac{1}{x+y} \left( x + \frac{3y}{2}\right) = \frac{1}{x+y} \cdot \frac{2x+3y}{2},$$ 
the distortion is 
$$\frac{2x+3y}{y} = 3 + \frac{2x}{y} 
> 3 + \frac{2}{\phi} =  1+2\phi = 2+\sqrt{5}.$$
This contradicts our assumption that $M$ has distortion strictly smaller than $2+\sqrt{5}-\varepsilon$, thus completing the proof.
\end{proof}


Next, we consider the objectives $\MAX \circ G$ for $G \in \{\AVG, \MAX\}$. Corollary~\ref{cor:5-max-of-G-ordinal} implies a distortion bound of at most $5$ for these objectives. When $G = \MAX$ and the metric is a line, we can get an improved bound of $3$ using a rather simple distributed mechanism, which essentially outputs the favorite alternative of an arbitrary agent. \\

\begin{tcolorbox}[title={\sc Arbitrary-Dictator},colback=white]
\begin{enumerate}[left=-1pt]
\item For each district $d \in D$, choose its representative to be the favorite alternative of an arbitrary agent in $N_d$.
\item Output an arbitrary district representative as the final winner. 
\end{enumerate}
\end{tcolorbox}

\begin{theorem} \label{thm:arbitrary-dictator}
When the metric is a line, {\sc Arbitrary-Dictator} has $(\MAX \circ \MAX)$-distortion at most $3$.
\end{theorem}

\begin{proof}
Let $I=(N, A, D, \delta)$ be an arbitrary instance, with $\delta$ a line metric. Without loss of generality, we assume that the alternative $w$ chosen by {\sc Arbitrary-Dictator} is positioned to the right of the dictator agent $i^*$. In addition, we denote by $o$ the optimal alternative, by $\ell$ the leftmost agent, and by $r$ the rightmost agent. With some abuse of notation, in the following we will also use $w$, $i^*$, $o$, $\ell$ and $r$ to denote the positions of the corresponding agents and alternatives on the line. 

First observe that the distortion is at most $2$ when $w \in [\ell, r]$. In particular, since there is at least one alternative in-between the agents, we have that $(\MAX \circ \MAX)(o | I) \geq \frac{\delta(\ell,r)}{2}$ and $(\MAX \circ \MAX)(w |I) \leq \delta(\ell,r)$. Hence, we can assume that $w > r$ (since $w \geq i^*$). Furthermore, it cannot be the case that $i^* \leq o \leq w$ (since then $i^*$ would prefer $o$ over $w$), neither that $r \leq w \leq o$ (since then $o$ would not be the optimal alternative). Consequently, $o < i^* \leq r < w$. Since $w$ is the favorite alternative of $i^*$, $w$ is also the favorite alternative of $r$. Hence, $\delta(r,w) \leq \delta(r,o)$. 
We distinguish between the following two subcases:
\begin{itemize}
\item
If $o \in [\ell,r]$, we clearly have that $(\MAX \circ \MAX)(o | I) \geq \frac{\delta(\ell,r)}{2}$, and 
$$(\MAX \circ \MAX)(w | I) = \delta(\ell,w) = \delta(\ell,r) + \delta(r,w) \leq 3 \cdot (\MAX \circ \MAX)(o | I).$$

\item 
If $o < \ell$, we have that 
$$(\MAX \circ \MAX)(o|I) = \delta(r,o) = \delta(\ell,o) + \delta(\ell,r) \geq \delta(\ell,r)$$
and
$$(\MAX \circ \MAX)(w|I) = \delta(\ell,w) = \delta(r,w) + \delta(\ell,r) \leq \delta(r,o) + \delta(\ell,r) \leq 2 \cdot (\MAX \circ \MAX)(o|I).$$
\end{itemize}
This completes the proof.
\end{proof}

The following theorem shows that the bound of $3$ is the best possible we can hope for the $\MAX \circ \MAX$ objective using a unanimous distributed mechanism, even when the metric is a line. This lower bound directly extends to ordinal mechanisms, as any such mechanism with finite distortion has to be unanimous (Lemma~\ref{lem:ordinal-unanimous}).

\begin{theorem} \label{thm:max-of-max-lower}
The $(\MAX \circ \MAX)$-distortion of any unanimous distributed mechanism is at least $3$, even when the metric is a line.
\end{theorem}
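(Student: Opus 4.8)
The plan is to exhibit a single family of line instances on which unanimity pins both district representatives to the two \emph{extreme} alternatives, so that the final winner---being one of the representatives---is forced to an extreme, whereas the true optimum is a third, \emph{central} alternative that no district top-ranks unanimously. The argument uses only the unanimity property and the two-step structure (aggregating representatives into the winner), so it applies to cardinal unanimous mechanisms as well, not just ordinal ones.

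Concretely, I would take three alternatives $a,m,b$ placed at positions $0,1,2$ on the line, and two districts whose agents are located at a single point each: the agents of the first district at $\frac12-\varepsilon$ and those of the second at $\frac32+\varepsilon$, for a small $\varepsilon>0$. Since the first district's agents are strictly closest to $a$ (distance $\frac12-\varepsilon$, versus $\frac12+\varepsilon$ to $m$ and $\frac32+\varepsilon$ to $b$), unanimity forces their representative to be $a$; symmetrically, the second district's representative is $b$. Hence the only two representatives are $a$ and $b$, and the final winner must be one of them.

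A direct computation gives $(\MAX\circ\MAX)(a)=(\MAX\circ\MAX)(b)=\frac32+\varepsilon$ and $(\MAX\circ\MAX)(m)=\frac12+\varepsilon$, so the optimum is $m$, with cost $\frac12+\varepsilon$. Whichever of $a,b$ the mechanism returns therefore has cost $\frac32+\varepsilon$, yielding distortion $\frac{3/2+\varepsilon}{1/2+\varepsilon}$, which tends to $3$ as $\varepsilon\to 0$; taking the supremum over the family gives a lower bound of $3$. This same instance clarifies why unanimity is the culprit and why the bound matches the upper bound of \cref{thm:arbitrary-dictator}: a non-unanimous mechanism is free to install the central alternative $m$ as a representative of one district and return it, escaping to distortion close to $1$, whereas a unanimous mechanism cannot.

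The hard part will be reconciling two opposing requirements. Unanimity demands that each district's agents \emph{strictly} top-rank their extreme alternative, which forces them to lie strictly inside the $a$--$m$ and $m$--$b$ midpoints; yet to drive both extremes' max-cost to exactly three times the central cost one wants the agents exactly \emph{at} those midpoints. This tension is why the bound is approached in the limit $\varepsilon\to 0$ rather than attained, and it is what dictates placing $m$ precisely at the center (so it is the unique optimum) while keeping it the unanimous choice of neither district (so unanimity cannot promote it to a representative). The remaining point to articulate carefully is that the final winner is necessarily one of the two pinned representatives $\{a,b\}$, which is exactly where the distributed aggregation of representatives into the winner is used.
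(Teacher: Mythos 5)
The gap is exactly at the step you flagged at the end as ``the remaining point to articulate carefully'': in the paper's model it is \emph{not} true that the final winner must be one of the two representatives. A distributed mechanism's second step receives the district representatives as input, but it may output \emph{any} alternative $w \in A$; only the restricted class of \emph{representative-selecting} mechanisms of Section~\ref{sec:extensions} is forced to return a representative, and unanimity, as the paper defines it, constrains only the first step (which alternative each district elects), never the second. This is fatal for your three-alternative construction. Consider the mechanism that elects each district's unanimous favorite whenever one exists (and, say, the in-district optimum otherwise), and whose over-districts rule outputs the alternative minimizing the maximum distance to the representatives---a legitimate direct voting rule applied to the pseudo-agents. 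This mechanism is unanimous; on your instance its representatives are indeed $a$ and $b$ at positions $0$ and $2$, yet its winner is the central alternative $m$ at position $1$, whose cost $\frac{1}{2}+\varepsilon$ is optimal. Its distortion on your entire family is $1$, so no lower bound follows for general unanimous mechanisms. The flaw is not presentational: a third, central optimum that is nobody's representative is precisely what a general unanimous mechanism can exploit in its second step, so the construction cannot be repaired while $m$ plays that role.

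The paper's proof is built to avoid this trap: it uses only \emph{two} alternatives, so that the winner is trivially one of $a,b$ no matter what the second step does, and the optimum is then necessarily one of the representatives themselves. Since with two alternatives both cannot be made simultaneously bad, the paper replaces your symmetric placement by an asymmetric one (district~1 at distance $1$ from $a$ and $3$ from $b$; district~2 equidistant from both) and instantiates the metric adversarially \emph{after} the winner, say $b$, is determined---legitimate because the second-step input, the representative pair $\{a,b\}$, is the same either way, so whichever alternative the mechanism favors can be made to cost three times the optimum. What your argument does correctly establish is the bound of $3$ for unanimous \emph{representative-selecting} mechanisms, and in that restricted setting it is arguably cleaner than the paper's proof: no without-loss-of-generality step is needed since both representatives are equally bad, and your strict preferences sidestep the tie present in the paper's metric (a detail that matters when one applies the theorem to cardinal mechanisms). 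But that is a strictly weaker statement than \cref{thm:max-of-max-lower}, which the paper invokes for \emph{all} distributed mechanisms---in particular in Section~\ref{sec:line-cardinal}, to conclude that the non-unanimity of $\lambda$-ARL is necessary for beating distortion $3$.
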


\begin{proof}
Consider an arbitrary unanimous distributed mechanism $M$, and the following instance $I$ with two alternatives $a$ and $b$. There are two districts with $x \geq 1$ agents each: All agents in the first district prefer alternative $a$, whereas all agents in the second district prefer $b$.
Due to unanimity, $a$ must be the representative of the first district, and $b$ the representative of the second district. The final winner can be any of the two, say $b$ without loss of generality. Now, consider the following metric:
\begin{itemize}
\item Alternative $a$ is located at $1$ and alternative $b$ is located at $3$.
\item All agents in the first district (who prefer $a$) are located at $0$.
\item All agents in the second district (who prefer $b$) are located at $2$.
\end{itemize}
Consequently, we have that $(\MAX \circ \MAX)(a|I)=1$ and $(\MAX \circ \MAX)(b|I)=3$, leading to a distortion of $3$. 
\end{proof}

Finally, let us focus on the objective $\MAX \circ \AVG$, for which we show a lower bound of approximately $4.23$ on the distortion of all ordinal distributed mechanisms, thus almost matching the upper bound of $5$ implied by Corollary~\ref{cor:5-max-of-G-ordinal}.  

\begin{theorem} \label{thm:max-of-avg-lower}
The $(\MAX \circ \AVG)$-distortion of any ordinal distributed mechanism is at least $2+\sqrt{5}-\varepsilon$, for any $\varepsilon>0$, even when the metric is a line.
\end{theorem}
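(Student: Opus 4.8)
The plan is to mirror the proof of Theorem~\ref{thm:avg-of-max-lower}, but to exploit that for $\MAX \circ \AVG$ the golden-ratio tradeoff lives \emph{within} a single district (through the ratio of agents preferring each alternative) rather than \emph{across} districts; consequently a contradiction instance with only \emph{two} districts will suffice. Throughout I would use two alternatives $a,b$ on the line, write $\phi=(1+\sqrt5)/2$, and suppose towards a contradiction that an ordinal mechanism $M$ has $(\MAX\circ\AVG)$-distortion strictly below $2+\sqrt5-\varepsilon$. As in the companion proof, I would first break symmetry by assuming that $M$ outputs $a$ on every two-district instance whose two representatives are $a$ and $b$.

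The first step is a forcing claim for the in-district rule, which I would obtain from single-district instances (where the representative is the winner, so the two steps collapse). Take one district with $p$ agents preferring $a$ and $q$ agents preferring $b$, and realize the metric with $a$ at $0$, $b$ at $1$, the $q$ agents placed at the midpoint (just on $b$'s side) and the $p$ agents at $a$. A short computation gives $(\MAX\circ\AVG)(b)/(\MAX\circ\AVG)(a)=1+2p/q$, so if $M$ returned $b$ here its distortion would be $1+2p/q$. Hence, to stay below $2+\sqrt5-\varepsilon$, the in-district rule must return $a$ on every profile with $p/q\ge\phi-\varepsilon/2$; since $M$ is ordinal this conclusion depends only on the ordinal profile (the counts $(p,q)$) and holds for any metric realizing it.

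The contradiction instance then uses two equal-sized districts. District $D_0$ has $p$ agents preferring $a$ and $q$ agents preferring $b$, with $p/q$ rational, at least $\phi-\varepsilon/2$, and strictly below $\phi$; by the forcing claim its representative is $a$. District $D_1$ has all agents preferring $b$, so by unanimity (Lemma~\ref{lem:ordinal-unanimous}) its representative is $b$. The representative profile is one $a$ and one $b$, so the symmetry-breaking assumption forces the winner to be $a$. I would realize the metric so that in $D_0$ the $a$-agents sit just left of the midpoint and the $b$-agents sit at $b$ (making the average distance to $b$ small and the average distance to $a$ only moderate), and place $D_1$'s agents just to the right of $b$, exactly far enough that $D_1$'s average distance to $b$ equals that of $D_0$. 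Writing $r=p/q$, the optimum is then $(\MAX\circ\AVG)(b)=\tfrac{r/2}{r+1}$, attained in both districts, while $(\MAX\circ\AVG)(a)=\tfrac{3r/2+1}{r+1}$, attained in $D_1$, so the distortion of the forced winner $a$ equals $3+2/r$. Since $r<\phi$ and $3+2/\phi=2+\sqrt5$, this is strictly larger than $2+\sqrt5-\varepsilon$, the desired contradiction.

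The main obstacle is the calibration in the last step: the positions of $D_1$ must be chosen so that the two maxima defining $(\MAX\circ\AVG)(a)$ and $(\MAX\circ\AVG)(b)$ are attained at the intended districts, and so that the two golden-ratio expressions align. The forcing constraint $1+2r$ increases in $r$ while the contradiction constraint $3+2/r$ decreases in $r$, and both equal $2+\sqrt5$ exactly at $r=\phi$ (using $1/\phi=\phi-1$); this balance is what pins the bound and forces the counts $p,q$ to be drawn from the window $\phi-\varepsilon/2\le r<\phi$. A secondary point to check is that the forcing claim is purely ordinal, so $D_0$ still elects $a$ even though its agents are positioned differently in the contradiction instance than in the forcing instance.
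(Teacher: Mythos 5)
Your proposal is correct and follows essentially the same route as the paper's proof: a single-district forcing step realizing the ratio $1+2p/q$ (the paper's $\frac{1+\theta}{1-\theta}$ with $p/q=\theta/(1-\theta)$), unanimity for the all-$b$ district, the same symmetry-breaking assumption on two-district inputs, and a calibrated two-district contradiction instance where the all-$b$ district sits just beyond $b$ at distance matching the mixed district's average (the paper's placement at $1+\theta/2$), yielding $3+2/r=\frac{2+\theta}{\theta}=2+\sqrt{5}$. The only difference is bookkeeping: you work with a rational ratio $r\in[\phi-\varepsilon/2,\phi)$ directly, whereas the paper uses the idealized irrational $\theta$ and relegates the rational approximation to a footnote.
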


\begin{proof}
Suppose towards a contradiction that there is an ordinal distributed mechanism $M$ with $(\MAX \circ \AVG)$-distortion strictly smaller than $2+\sqrt{5}-\varepsilon$, for any $\varepsilon>0$. We will consider instances with two alternatives $a$ and $b$. We can assume without loss of generality that $M$ outputs $a$ as the winner whenever it is given as input an instance with two districts of the same size, such that both alternatives are representative of some district. However, we do not know how $M$ decides the representative of a district in case there is a tie between the two alternatives therein. Let $\theta = \frac{\sqrt{5}-1}{2} = \phi-1 \approx 0.618$, and denote by $x$ a sufficiently large integer, such that $\theta x$ and $(1-\theta)x$ are both integers.\footnote{Here, we assume the existence of such an $x$ to simplify the presentation of the proof. Since $\theta$ is irrational, to be exact, we can choose $x$ to be a sufficiently large integer so that $\theta x$ and $(1-\theta)x$ are approximately equal to their floors, which would lead to the distortion lower bound being $2+\sqrt{5}-\varepsilon$, for any $\varepsilon >0$.} Observe that $\theta$ is such that $\frac{1+\theta}{1-\theta} = \frac{2+\theta}{\theta} = 1+2\phi = 2+\sqrt{5}$.

First, consider an instance $I_1$ with a single district consisting of $x$ agents, such that 
$\theta x$ agents prefer $a$, and $(1-\theta) x$ agents prefer $b$. 
Suppose that $M$ outputs $b$ (as the district representative and the final winner) when given these ordinal preferences of the agents in the district as input, and consider the following metric: 
\begin{itemize}
\item Alternative $a$ is located at $0$ and alternative $b$ is located at $1$.
\item The $\theta x$ agents that prefer $a$ are located at $0$.
\item The $(1-\theta) x$ agents that prefer $b$ are located at $1/2$.
\end{itemize}
Since there is only one district, we clearly have that
$$(\MAX \circ \AVG)(a|I_1) = \frac{1}{x}\left( 0 +  \frac{(1-\theta) x}{2} \right) = \frac{1}{x} \cdot  \frac{(1-\theta)x}{2}$$
and
$$(\MAX \circ \AVG)(b|I_1) = \frac{1}{x}\left( \theta x +  \frac{(1-\theta) x}{2} \right) = \frac{1}{x} \cdot  \frac{(1+\theta)x}{2}.$$
Hence, the distortion in this case is $\frac{1+\theta}{1-\theta} = 2+\sqrt{5}$. As this contradicts our assumption that $M$ has distortion strictly smaller than $2+\sqrt{5}$, $M$ must choose $a$ as the winner in such a single-district instance. 

Now, consider an instance $I_2$ with the following two districts:
\begin{itemize}
\item The first district is similar to the one in instance $I_1$, i.e., it consists of $\theta x$ agents that prefer $a$, and $(1-\theta) x$ agents that prefer $b$. By the above discussion, the representative of this district must be  alternative $a$. 

\item The second district consists of $x$ agents, all of whom prefer $b$. Due to unanimity (Lemma~\ref{lem:ordinal-unanimous}), the representative of this district must be alternaive $b$. 
\end{itemize}
Since both $a$ and $b$ are representative of some district, $M$ outputs $a$ as the final winner.
Now, consider the following metric:
\begin{itemize}
\item Alternative $a$ is located at $0$ and alternative $b$ is located at $1$. 

\item In the first district, the $\theta x$ agents that prefer $a$ are located at $1/2$, and the remaining $(1-\theta) x$ agents that prefer $b$ are located at $1$. 
The total distance of the agents in the district is 
$\frac{\theta x}{2} + (1-\theta) x = \frac{(2-\theta)x}{2}$ from $a$,
and 
$\frac{\theta x}{2}$ from $b$. 

\item In the second district, all $x$ agents are located at $1 + \frac{\theta}{2}$. The total distance of the agents in the district is $\frac{(2 + \theta)x}{2}$ from $a$, and $\frac{\theta x}{2}$ from $b$. 
\end{itemize}
Consequently, from the second district, we have that 
$$(\MAX \circ \AVG)(a|I_2) = \frac{1}{x} \cdot \frac{(2 + \theta)x}{2}$$
and, from both districts, we have that
$$(\MAX \circ \AVG)(b|I_2) = \frac{1}{x} \cdot \frac{\theta x}{2}.$$
Thus, the distortion of the mechanism for this instance is $\frac{2+\theta}{\theta} = 2 + \sqrt{5}$, a contradiction. 
\end{proof}

\subsection{Cardinal Mechanisms} \label{sec:line-cardinal}
We now turn out attention to distributed mechanisms that have access to the line metric, and are thus aware of the distances between agents and alternatives. Recall that for such mechanisms, Corollary~\ref{cor:main-cardinal} implies a distortion bound of $3$ for all the objectives we have considered so far. As in the case of ordinal mechanisms, when the metric is a line, \citet{FV2020facility} showed a matching lower bound of $3$ for $\AVG \circ \AVG$ (when the districts are symmetric), which extends for $\AVG \circ \MAX$ as the construction also works for instances with  single-agent districts, in which case $\MAX = \AVG$. 

For objectives of the form $\MAX \circ G$, we design a novel distributed mechanism that is tailor-made for the line metric and achieves a distortion of at most $1+\sqrt{2} \leq 2.42$. This mechanism is particularly interesting as it is {\em not} unanimous: Even when all agents in a district prefer an alternative $a$ to every other alternative (i.e., $a$ is the closest alternative to all agents), the mechanism may end up choosing a different alternative as the representative. In fact, to break the distortion barrier of $3$, our mechanism {\em has} to be non-unanimous; by setting $x=1$ in the proof of Theorem~\ref{thm:max-of-max-lower}, we have that any unanimous distributed mechanism cannot achieve a $(\MAX \circ G)$-distortion better than 3. 

For a given $\lambda \geq 1$, we say that an alternative is {\em $\lambda$-acceptable} for a district $d\in D$ if her $G$-value for the agents in $N_d$ is at most $\lambda$ times the $G$-value of any other alternative for the agents in $N_d$. Given an objective $G$, we define a class of distributed mechanisms parameterized by $\lambda$ that work as follows: \\

\begin{tcolorbox}[title={\sc$\lambda$-Acceptable-Rightmost-Leftmost ($\lambda$-ARL)},colback=white]
\begin{enumerate}[left=-1pt]
\item For each district $d$, choose its representative to be the \emph{rightmost} $\lambda$-acceptable alternative for the district.
\item Output the leftmost district representative as the final winner.
\end{enumerate}
\end{tcolorbox}

\begin{theorem}\label{thm:max-of-g-line-alpha}
For any $G \in \{\AVG,\MAX\}$, the $(\MAX \circ G)$-distortion of $\lambda$-ARL is at most $\max\left\{2+\frac{1}{\lambda},\lambda\right\}$.
\end{theorem}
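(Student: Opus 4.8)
The plan is to compare the winner $w$ produced by $\lambda$-ARL against an optimal alternative $o$, using two structural facts about the $G$-cost on a line. Writing $g_d(j)$ for the $G$-value of alternative $j$ for district $d$, these are: (i) $g_d$ is convex along the line with its minimum at the district's median/midpoint, so it is non-increasing to the left of this point and non-decreasing to the right; and (ii) the averaged/maxed triangle inequalities $\delta(x,y) \le g_d(x) + g_d(y)$ and $g_d(x) \le g_d(y) + \delta(x,y)$ hold for any two points $x,y$. Both (i) and (ii) hold for $G \in \{\AVG,\MAX\}$. I will also use repeatedly that every district $d$ satisfies $\min_{a \in A} g_d(a) \le g_d(o) \le \opt$, since $o \in A$ and $o$ is globally optimal for $\MAX \circ G$. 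Let $d'$ be a district attaining $(\MAX \circ G)(w\mid I) = g_{d'}(w)$, and let $d^*$ be a district whose representative is $w$ (one exists, as $w$ is the leftmost representative).

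The first step splits on whether $w$ is $\lambda$-acceptable for the worst district $d'$. If it is, then by definition $g_{d'}(w) \le \lambda \min_{a\in A} g_{d'}(a) \le \lambda\, g_{d'}(o) \le \lambda\,\opt$, giving distortion at most $\lambda$. Otherwise $w$ is not $\lambda$-acceptable for $d'$; since the acceptable alternatives form a sub-interval of the acceptable \emph{points} (by convexity), this interval contains both the minimizer of $g_{d'}$ and the representative $y_{d'} \ge w$, while $w$ lies outside it. As the interval is connected and $w \le y_{d'}$, the point $w$ must lie strictly to the left of it, and in particular strictly to the left of the minimizer of $g_{d'}$.

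The second step then splits on the position of $o$ relative to $w$. If $o \le w$, both lie in the non-increasing branch of $g_{d'}$, so $g_{d'}(w) \le g_{d'}(o) \le \opt$ and the distortion is at most $1$. The remaining (and crux) case is $o > w$. Here I exploit $d^*$: since $w$ is the \emph{rightmost} $\lambda$-acceptable alternative for $d^*$ and $o$ is an alternative strictly to its right, $o$ cannot be $\lambda$-acceptable for $d^*$, so $g_{d^*}(o) > \lambda \min_{a\in A} g_{d^*}(a)$; combined with $g_{d^*}(o) \le \opt$ this yields $\min_{a\in A} g_{d^*}(a) < \opt/\lambda$. Let $a^\dagger$ be a cheapest alternative for $d^*$; being acceptable, it lies at or to the left of the rightmost acceptable alternative $w$, so $a^\dagger \le w < o$. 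Therefore
\[
\delta(w,o) = o - w \le o - a^\dagger = \delta(a^\dagger, o) \le g_{d^*}(a^\dagger) + g_{d^*}(o) < \tfrac{1}{\lambda}\opt + \opt = \Big(1+\tfrac1\lambda\Big)\opt,
\]
and finally $g_{d'}(w) \le g_{d'}(o) + \delta(w,o) \le \opt + (1+\tfrac1\lambda)\opt = (2+\tfrac1\lambda)\opt$. Combining the three cases gives $\max\{\lambda,\, 2+1/\lambda\}$.

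The main obstacle is the crux case $o > w$: the naive bound $\delta(w,o) \le g_{d^*}(w) + g_{d^*}(o)$ only gives $2\,\opt$ (an overall factor of $3$), because $g_{d^*}(w)$ can be as large as $\lambda \cdot \min_{a} g_{d^*}(a) \approx \opt$. The key is to route through the cheapest alternative $a^\dagger$ instead of through $w$: since $a^\dagger \le w$, replacing $w$ by $a^\dagger$ only strengthens $o - w \le o - a^\dagger$, and $g_{d^*}(a^\dagger)=\min_a g_{d^*}(a)$ is controlled by the non-acceptability of $o$. This is where the $1/\lambda$ saving arises, and handling the acceptability comparison relative to the best \emph{alternative} (not the best point) is the delicate part that makes the $\lambda$ versus $2+1/\lambda$ trade-off balance at $\lambda=1+\sqrt2$.
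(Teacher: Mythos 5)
Your proof is correct and follows essentially the same route as the paper's: your three cases ($w$ being $\lambda$-acceptable for the worst district, $o$ weakly to the left of $w$, and $o$ strictly to the right of $w$) are exactly the paper's case analysis in a permuted order, and your crux argument---deducing that $o$ is not $\lambda$-acceptable for the district represented by $w$ and routing the bound on $\delta(w,o)$ through that district's cheapest alternative $a^\dagger$ (the paper's $o_w$)---is identical to the paper's Case~1. The only differences are organizational: you split on acceptability before position rather than after, and you treat $\AVG$ and $\MAX$ uniformly via convexity/single-peakedness, which the paper handles by writing out only the $\AVG$ case (and later generalizes in Section~\ref{sec:generalization}).
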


\begin{proof}
We will present the proof only for the $\MAX \circ \AVG$ objective; the proof for $\MAX \circ \MAX$ follows by similar, even simpler arguments. Consider an arbitrary instance $I = (N,A,D,\delta)$, where $\delta$ is a line metric. 
Let $w$ be the alternative selected by $\lambda$-ARL when given as input $I$, and denote by $o$ the optimal alternative. We consider the following two cases depending on the relative positions of $w$ and $o$.

\medskip

\noindent 
{\bf Case 1: $w$ is to the left of $o$.} 
Let $d^* \in \arg\max_{d \in D} \sum_{i \in N_d} \delta(i,w)$ be the district that gives the max cost for $w$. By the triangle inequality, and since $(\MAX \circ \AVG)(o|I) \geq \frac{1}{n_d} \sum_{i \in N_d} \delta(i,o)$ for every $d \in D$, we have 
\begin{align*}
(\MAX \circ \AVG)(w|I) 
&= \frac{1}{n_{d^*}}\sum_{i \in N_{d^*}} \delta(i,w) \\
&\leq \frac{1}{n_{d^*}}\sum_{i \in N_{d^*}} \bigg( \delta(i,o) + \delta(w,o) \bigg) \\
&\leq (\MAX \circ \AVG)(o|I) + \delta(w,o). 
\end{align*}
Now, let $d_w$ be the district whose representative is $w$. In other words, $w$ is the rightmost $\lambda$-acceptable alternative for $d_w$. Let $o_w$ be the alternative that minimizes the total distance of the agents in $d_w$. We make the following two observations:
\begin{itemize}
\item Since $o_w$ is trivially $\lambda$-acceptable, it must be the case that she is located to the left of $w$, and thus $\delta(w,o) \leq \delta(o_w,o)$.
\item Since $o$ is to the right of $w$ and is not the representative of $d_w$, it must be the case that she is not $\lambda$-acceptable for $d_w$, that is, $\sum_{i \in N_{d_w}} \delta(i,o) > \lambda \sum_{i \in N_{d_w}} \delta(i,o_w)$. 
\end{itemize}
Combining these two observations together with the triangle inequality, we obtain
\begin{align*}
\delta(w,o) 
&\leq \delta(o,o_w)  
= \frac{1}{n_{d_w}} \sum_{i \in N_{d_w}} \delta(o,o_w)  \\
&\leq \frac{1}{n_{d_w}} \sum_{i \in N_{d_w}} \bigg( \delta(i,o) + \delta(i,o_w) \bigg) \\
&\leq \bigg( 1 + \frac{1}{\lambda} \bigg) \cdot \frac{1}{n_{d_w}} \sum_{i \in N_{d_w}} \delta(i,o) \\
&\leq \bigg( 1 + \frac{1}{\lambda} \bigg) \cdot (\MAX \circ \AVG)(o|I).
\end{align*}
Putting everything together, we obtain a distortion of at most $2 + \frac{1}{\lambda}$.

\medskip

\noindent 
{\bf Case 2: $w$ is to the right of $o$.} 
As in the previous case, let $d^*$ be the district with maximum cost for $w$. Denote by $y^*$ the representative of $d^*$, and by $o^*$ the optimal alternative for $d^*$ that minimizes the (average) total distance of the agents in $d^*$. We consider the following two subcases:
\begin{itemize}
\item 
{\bf $w$ is not $\lambda$-acceptable for $d^*$.} 
To argue about this case, we will need the following folklore technical lemma for instances on the line:

\begin{lemma} \label{lem:sum-monotonicity}
Let $S$ be a set of agents, and denote by $o_S$ the optimal alternative for the agents in $S$ that minimizes their total distance. Then, for every two alternatives $x$ and $y$, such that $y < x < o_S$, or $o_S < x < y$, it holds that $\sum_{i \in S} \delta(i,x) \leq \sum_{i \in S} \delta(i,y)$.
\end{lemma}

The fact that $w$ is the leftmost district representative implies that $y^*$ is to the right of $w$. 
In case $o^* < w < y$, Lemma~\ref{lem:sum-monotonicity} would yield that
\begin{align*}
\sum_{i \in N_{d^*}} \delta(i,w) \leq \sum_{i \in N_{d^*}} \delta(i,y^*) \leq \alpha \sum_{i \in N_{d^*}} \delta(i,o^*),
\end{align*}
thus contradicting that $w$ is not $\lambda$-acceptable for $d^*$. Hence, $o < w < o^*$, and Lemma~\ref{lem:sum-monotonicity} implies that
\begin{align*}
(\MAX \circ \AVG)(w|I) = \frac{1}{n_{d^*}}\sum_{i \in N_{d^*}} \delta(i,w) \leq \frac{1}{n_{d^*}} \sum_{i \in N_{d^*}} \delta(i,o) \leq (\MAX \circ \AVG)(o|I),
\end{align*}
meaning that $w$ is no worse than the optimal alternative in this case.

\item {\bf $w$ is $\lambda$-acceptable for $d^*$.} In this case, we clearly have that 
\begin{align*}
(\MAX \circ \AVG)(w|I)
&= \frac{1}{n_{d^*}} \sum_{i\in N_{d^*}} \delta(i,w) \\
&\leq \lambda \cdot \frac{1}{n_{d^*}} \sum_{i \in N_{d^*}} \delta(i,o^*) \\
&\leq \lambda \cdot \frac{1}{n_{d^*}} \sum_{i\in N_{d^*}} \delta(i,o) \\
&\leq \lambda \cdot (\MAX \circ \AVG)(o|I),
\end{align*} 
and thus the distortion is at most $\lambda$ in this case.
\end{itemize}
Putting everything together, the distortion of the mechanism is at most $\max\left\{2+ \frac{1}{\lambda}, \lambda \right\}$.
\end{proof}

By optimizing the distortion bound achieved by the class of $\lambda$-ARL mechanisms over the parameter $\lambda$, we see that the best such mechanism achieves a distortion of at most $1+\sqrt{2}$.

\begin{corollary} \label{cor:max-of-g-line}
For any $G \in \{\AVG,\MAX\}$, the $(\MAX \circ G)$-distortion of ($1+\sqrt{2}$)-ARL is at most $1+\sqrt{2}$.
\end{corollary}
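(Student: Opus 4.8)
The plan is simply to invoke Theorem~\ref{thm:max-of-g-line-alpha} and then optimize its bound over the free parameter $\lambda$. That theorem guarantees that for every $\lambda \geq 1$, the $(\MAX \circ G)$-distortion of $\lambda$-ARL is at most $\max\{2 + \frac{1}{\lambda},\, \lambda\}$, for both $G \in \{\AVG,\MAX\}$. Hence it suffices to find the value of $\lambda \geq 1$ minimizing this maximum and to evaluate the resulting bound there; the corollary is a one-line consequence of the preceding theorem.

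The key observation is that $2 + \frac{1}{\lambda}$ is strictly decreasing in $\lambda$ while the term $\lambda$ is strictly increasing. Consequently, over the feasible range $\lambda \geq 1$, the pointwise maximum of the two expressions is minimized at the unique point where the decreasing branch meets the increasing branch. First I would set $2 + \frac{1}{\lambda} = \lambda$ and clear denominators, obtaining the quadratic $\lambda^2 - 2\lambda - 1 = 0$, whose positive root is $\lambda = 1 + \sqrt{2}$. The other root, $1 - \sqrt{2}$, is negative and hence infeasible under the constraint $\lambda \geq 1$, so $\lambda = 1 + \sqrt{2}$ is the relevant minimizer.

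It then remains to verify that the common value of the two expressions at $\lambda = 1 + \sqrt{2}$ is indeed $1 + \sqrt{2}$. This is immediate for the increasing branch, and for the decreasing branch one rationalizes $\frac{1}{1+\sqrt{2}} = \sqrt{2} - 1$, so that $2 + \frac{1}{1+\sqrt{2}} = 1 + \sqrt{2}$ as well, confirming that the two branches genuinely coincide there. Substituting $\lambda = 1 + \sqrt{2}$ into Theorem~\ref{thm:max-of-g-line-alpha} therefore yields a distortion bound of exactly $1 + \sqrt{2}$, as claimed.

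There is no real obstacle here: the entire argument is the elementary minimization of a maximum of two monotone functions. The only points warranting the slightest care are checking that the minimizer respects the feasibility constraint $\lambda \geq 1$ (it does, since $1 + \sqrt{2} > 1$) and that the two monotone branches actually cross within the feasible range rather than one dominating throughout—both of which hold by the sign and monotonicity analysis above.
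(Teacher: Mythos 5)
Your proposal is correct and matches the paper's approach exactly: the corollary is obtained by optimizing the bound $\max\{2+\frac{1}{\lambda},\lambda\}$ from Theorem~\ref{thm:max-of-g-line-alpha} over $\lambda$, and equalizing the two branches gives $\lambda^2-2\lambda-1=0$, hence $\lambda=1+\sqrt{2}$ with common value $1+\sqrt{2}$. The paper leaves this one-line optimization implicit, and your verification (including the feasibility check $1+\sqrt{2}\geq 1$ and the rationalization $\frac{1}{1+\sqrt{2}}=\sqrt{2}-1$) fills it in correctly.
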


We conclude this section by presenting a lower bound of $1+\sqrt{2}$ on the $(\MAX \circ G)$-distortion of distributed mechanisms, which holds even when the metric is a line, thus showing that $(1+\sqrt{2})$-ARL is the best possible on the line.  

\begin{theorem} \label{thm:max-of-g-cardinal-lower}
For any $G \in \{\AVG,\MAX\}$, the $(\MAX \circ G)$-distortion of any distributed mechanism is at least $1+\sqrt{2}$, even when the metric is a line.
\end{theorem}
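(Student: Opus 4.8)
The plan is to argue by contradiction: suppose some distributed mechanism $M$ (cardinal or ordinal) has $(\MAX \circ G)$-distortion strictly below $1+\sqrt{2}$, and derive a contradiction on the line using just two alternatives $a=0$ and $b=1$. I would work entirely with \emph{single-agent} districts, so that within every district $\AVG$ and $\MAX$ coincide; this lets one construction settle both $G \in \{\AVG,\MAX\}$ simultaneously, since then $(\MAX \circ G)(j) = \max_{d} \delta(i_d,j)$, where $i_d$ is the lone agent of district $d$.

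The two structural facts I would exploit are: (i) the in-district rule that selects a district's representative is a fixed function of that district's own agents (and of the fixed alternatives), hence independent of the other districts; and (ii) the over-districts rule sees \emph{only} the representatives, so any two instances inducing the same profile of representatives force $M$ to output the same winner. For (i), placing the single agent of a district at a point where choosing the ``wrong'' alternative as representative already costs a factor of exactly $1+\sqrt{2}$ in the corresponding \emph{single-district} instance pins the representative down: an agent at $-\sqrt{2}/2$ has $a$ as its optimum at distance $\sqrt{2}/2$ while $b$ is at distance $1+\sqrt{2}/2$, and $(1+\sqrt{2}/2)/(\sqrt{2}/2) = 1+\sqrt{2}$, so the assumed strict bound forces the representative to be $a$; symmetrically an agent at $\sqrt{2}/2$ is forced to representative $b$, and the reflected points $1+\sqrt{2}/2$ and $1-\sqrt{2}/2$ are forced to $b$ and $a$.

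With the representatives forced, I would build two mirror-image instances with identical representative profiles. Instance $I$ has single agents at $-\sqrt{2}/2$ and $\sqrt{2}/2$, forcing representatives $a$ and $b$; here $(\MAX \circ G)(a) = \sqrt{2}/2$ and $(\MAX \circ G)(b) = 1+\sqrt{2}/2$, so $a$ is optimal and choosing $b$ incurs distortion exactly $1+\sqrt{2}$, forcing $M$ to output $a$. Instance $I'$ is the reflection about $1/2$, with agents at $1+\sqrt{2}/2$ and $1-\sqrt{2}/2$, again forcing representatives $b$ and $a$; now $b$ is optimal and choosing $a$ incurs distortion exactly $1+\sqrt{2}$, forcing $M$ to output $b$. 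But $I$ and $I'$ present the over-districts rule with the very same input, namely two representatives located at $a$ and $b$, so by (ii) the mechanism must return the same winner in both, contradicting that it must output $a$ in $I$ and $b$ in $I'$.

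The main obstacle is conceptual rather than computational: I must prevent the mechanism from ``escaping'' either by a clever context-dependent representative choice or by recovering a good alternative at the over-districts stage. The former is neutralized by placing each agent exactly at the threshold position, so that every representative other than the intended one already violates the strict distortion bound on the one-district probe. The latter is neutralized by using only two alternatives and making $I$ and $I'$ literally indistinguishable to the over-districts rule (the same number of equal-size districts, the same representatives, the same alternative positions), which is precisely what forces a single common winner and yields the contradiction.
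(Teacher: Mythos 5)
Your proposal is correct and follows essentially the same route as the paper's proof: single-agent districts (so $\AVG=\MAX$), single-district probe instances placed at threshold positions where the wrong representative already costs exactly $1+\sqrt{2}$, and then a two-district instance in which the over-districts step is forced into a $(1+\sqrt{2})$-distortion choice; your positions are just a rescaling of the paper's (alternatives at $0,2$ and agents at $2\pm\sqrt{2}$). The only difference is presentational: where the paper fixes the over-districts tie-breaking ``without loss of generality,'' you make that step explicit by exhibiting two mirror-image instances that are indistinguishable at the over-districts stage yet require different winners, which is a slightly more careful rendering of the same argument.
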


\begin{proof}
Suppose towards a contradiction that there exists a distributed mechanism $M$ with $(\MAX \circ G)$-distortion strictly smaller than $1+\sqrt{2}$. We will consider instances with two alternatives $a$ and $b$ that are located at $0$ and $2$, respectively. In addition, our instances will consist of single-agent districts so that $\AVG=\MAX$ therein, and thus the distortion bounds hold for any $G \in \{\AVG,\MAX\}$. We can assume without loss of generality that $M$ outputs $a$ as the winner whenever it is given as input an instance with two districts of the same size, such that both alternatives are representative of some district. 

First, consider an instance $I_1$ with a single district consisting of one agent that is located at $2-\sqrt{2}$. Clearly, we have that $(\MAX \circ G)(a|I_1)=2-\sqrt{2}$ and $(\MAX \circ G)(b|I_1) = \sqrt{2}$. Therefore, $M$ must choose $a$ as the district representative, and thus the overall winner; otherwise its distortion would be at least $\frac{\sqrt{2}}{2-\sqrt{2}}=1 + \sqrt{2}$. 

Next, consider an instance $I_2$ with a single district consisting of one agent that is located at $2+\sqrt{2}$. Here, we have that $(\MAX \circ G)(a|I_2)=2+\sqrt{2}$ and $(\MAX \circ G)(b|I_2) = \sqrt{2}$. Therefore, $M$ must choose $b$ as the district representative; otherwise its distortion would be at least $\frac{2+\sqrt{2}}{\sqrt{2}}=1 + \sqrt{2}$.

Finally, consider an instance $I_3$ with the following two districts:
\begin{itemize}
\item The first district is similar to the one in $I_1$ and consists of an agent that is located at $2-\sqrt{2}$. By the above discussion, the representative of this district is $a$.

\item The second district is similar to the one in $I_2$ and consists of an agent that is located at $2+\sqrt{2}$. By the above discussion, the representative of this district is $b$.
\end{itemize}
Since both alternatives are representative of some district and the districts have the same unit size, $M$ outputs $a$ as the final winner. However, it holds that $(\MAX \circ G)(a|I_3)=2+\sqrt{2}$ (realized by the agent in the second district) and $(\MAX \circ G)(b|I_3)=\sqrt{2}$ (realized by both agents), and thus the distortion of $M$ is at least $\frac{2+\sqrt{2}}{\sqrt{2}} = 1+\sqrt{2}$, a contradiction. 
\end{proof}

\section{Extensions and Generalizations} \label{sec:extensions}

\subsection{Mechanisms that Select from the set of Representatives}
In the previous sections we looked at distributed mechanisms, which can choose any alternative as the final winner by essentially considering the district representatives as proxies. We now focus on the case where the final winner can only be chosen from among the district representatives (as in the work of \cite{FV2020facility}). To make the distinction between general mechanisms and those that select from the pool of district representatives clear, we will use the term {\em representative-selecting} to refer to the latter.

It is not hard to see that, with the exception of the bounds implied by Theorem~\ref{thm:main-composition} and its corollaries for general metric spaces (Corollaries~\ref{cor:main-cardinal} and~\ref{cor:main-ordinal}), the rest of our results follow by representative-selecting mechanisms. In particular:
\begin{itemize}
\item Every $\alpha$-in-arbitrary-over mechanism, as well as {\sc Arbitrary-Dictator}, choose some arbitrary representative (Theorem~\ref{thm:alpha-in-arbitrary-over}, Corollary~\ref{cor:5-max-of-G-ordinal}, and Theorem~\ref{thm:arbitrary-dictator});

\item The $1$-in-$1$-over cardinal mechanism and the $3$-in-$1$-over ordinal mechanism for $\AVG\circ G$ in the line metric, as well as {\sc Arbitrary-Median}, choose the median representative (Corollary~\ref{cor:main-cardinal} for line metric, Corollary~\ref{cor:SUM-G-line}, and Theorem~\ref{thm:arbitrary-median});

\item Every {\sc $\lambda$-ARL} mechanism chooses the leftmost representative (Theorem~\ref{thm:max-of-g-line-alpha} and Corollary~\ref{cor:max-of-g-line}). 
\end{itemize}
It is also not hard to see that all our lower bounds (Theorems~\ref{thm:avg-of-max-lower},~\ref{thm:max-of-max-lower},~\ref{thm:max-of-avg-lower}, and~\ref{thm:max-of-g-cardinal-lower}) also extend for the class of representative-selecting mechanisms: some representative is always chosen as the final winner in all instances used in the constructions. Based on all of the above discussion, we have the following corollary, which collects the best distortion bounds for the different objectives we consider.
 
\begin{theorem}\label{thm:representative-extension}
We can form ordinal representative-selecting mechanisms with distortion at most $5$ for $\MAX \circ G$ and general metric spaces. When the metric space is a line, the worst-case distortion of ordinal mechanisms is exactly $7$ for $\AVG \circ \AVG$, between $2+\sqrt{5}$ and $5$ for $\AVG \circ \MAX$, exactly $3$ for $\MAX \circ \MAX$, and at least $2+\sqrt{5}$ for $\MAX \circ \AVG$. When the metric is a line, the distortion of cardinal representative-selecting mechanisms is exactly $3$ for $\AVG \circ G$, and exactly $1+\sqrt{2}$ for $\MAX \circ G$.
\end{theorem}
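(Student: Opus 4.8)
The plan is to treat this theorem as a consolidation of the earlier results, verifying in each case that (i) the mechanism achieving the stated upper bound only ever outputs one of the district representatives, and (ii) the instance family witnessing the stated lower bound always designates a final winner that is itself a representative. Since a representative-selecting mechanism is simply the restriction of the winner set to the pool $\{y_d : d \in D\}$, each such verification immediately transfers the corresponding bound from the general setting to the representative-selecting one.

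First I would handle the upper bounds by matching each entry to the mechanism that realizes it. For $\MAX \circ G$ in general metrics, the bound of $5$ in Corollary~\ref{cor:5-max-of-G-ordinal} is realized by a $3$-in-arbitrary-over mechanism, which outputs an arbitrary representative and is therefore representative-selecting. On the line, the bound of $7$ for $\AVG \circ \AVG$ (Corollary~\ref{cor:SUM-G-line}) and of $5$ for $\AVG \circ \MAX$ (Theorem~\ref{thm:arbitrary-median}) are both attained by mechanisms that output the median representative; the bound of $3$ for $\MAX \circ \MAX$ is attained by {\sc Arbitrary-Dictator} (Theorem~\ref{thm:arbitrary-dictator}), which outputs an arbitrary representative; and the bound of $5$ for $\MAX \circ \AVG$ again comes from Corollary~\ref{cor:5-max-of-G-ordinal}. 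For the cardinal line bound of $1+\sqrt{2}$ for $\MAX \circ G$, Corollary~\ref{cor:max-of-g-line} is attained by $(1+\sqrt{2})$-ARL, which outputs the leftmost representative. The one entry requiring a small extra remark is the cardinal line bound of $3$ for $\AVG \circ G$: here I would invoke Corollary~\ref{cor:main-cardinal} specialized to the line and note that an over-districts rule of $\AVG$-distortion $1$ can be taken to be the median of the representatives (which minimizes total distance on the line and is itself a representative), while the in-district rule selects the district's own optimal alternative; hence the composed mechanism is representative-selecting.

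Finally I would check the lower bounds. Each of Theorems~\ref{thm:avg-of-max-lower},~\ref{thm:max-of-max-lower},~\ref{thm:max-of-avg-lower}, and~\ref{thm:max-of-g-cardinal-lower} is built from instances in which the final winner designated by the argument is always one of the two alternatives $a, b$, each of which is a representative of some district in the relevant instance; thus the constructions already live inside the representative-selecting class and the bounds $2+\sqrt{5}$, $3$, $2+\sqrt{5}$, and $1+\sqrt{2}$ carry over verbatim. The remaining lower bounds---$7$ for $\AVG \circ \AVG$ and $3$ for $\AVG \circ G$ on the cardinal line---are inherited from \citet{FV2020facility}, whose model is precisely the representative-selecting one, with the extension to $\AVG \circ \MAX$ following because their single-agent-district instances satisfy $\MAX = \AVG$. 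The main thing to be careful about is bookkeeping: ensuring that every cell of the summary is matched to both a mechanism and a construction that genuinely respect the representative-selecting restriction, rather than an argument that might secretly output a non-representative alternative. Once the cardinal-line $\AVG \circ G$ upper bound is handled as above, no genuine obstacle remains, and the stated collection of bounds follows.
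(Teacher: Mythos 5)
Your proposal is correct and follows essentially the same route as the paper: it verifies that each upper-bound mechanism (the $3$-in-arbitrary-over mechanisms, {\sc Arbitrary-Dictator}, {\sc Arbitrary-Median}, the $1$-in-$1$-over and $3$-in-$1$-over mechanisms using the median representative, and $\lambda$-ARL) outputs only district representatives, that the lower-bound constructions of Theorems~\ref{thm:avg-of-max-lower},~\ref{thm:max-of-max-lower},~\ref{thm:max-of-avg-lower}, and~\ref{thm:max-of-g-cardinal-lower} always designate a representative as winner, and that the remaining bounds ($7$ and $3$ for $\AVG\circ\AVG$, extended to $\AVG \circ \MAX$ via single-agent districts) come from \citet{FV2020facility}, whose model is already representative-selecting. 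Your explicit remark that the median of the representatives both minimizes total distance on the line and is itself a representative is exactly the observation the paper relies on (implicitly) for the cardinal $\AVG \circ G$ entry, so no gap remains.
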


Now, let us see how choosing only from the district representatives affects the bounds implied by Theorem~\ref{thm:main-composition} for general metric spaces. For clarity, we focus on objectives of the form $\AVG \circ G$; our discussion can easily be adapted for objectives of the form $\MAX \circ G$. Let $M$ be a representative-selecting mechanism that uses some in-district and over-districts direct voting rules. Given an instance $I=(N,M,D,\delta)$, let $R=R_M(I)$ be the set of district representatives chosen by $M$, and denote by $w=M(I) \in R$ the final winner. Clearly, Theorem~\ref{thm:main-composition} would hold without any modifications if, for any instance $I$, $w$ satisfies inequality \eqref{eq:SS-beta-over-districts} in the proof of the theorem:
\begin{align*}
 \forall j \in A: \sum_{i \in R} \delta(i, w) \leq \beta \cdot \sum_{i \in R} \delta(i, j).
\end{align*}
However, the distortion guarantees of direct voting rules used by $M$ in and over the districts are usually only with respect to the set of alternatives from which they are allowed to choose. So, if $M$ uses an over-districts rule that has $\AVG$-distortion at most $\gamma$, we have that, for any instance $I$, $w$ is such that
\begin{align}\label{eq:repsonly}
 \forall j \in R: \sum_{i \in R} \delta(i, w) \leq \gamma \cdot \sum_{i \in R} \delta(i, j).
\end{align}
Inequality~\eqref{eq:repsonly} cannot directly substitute inequality~\eqref{eq:SS-beta-over-districts} in the proof of Theorem~\ref{thm:main-composition} as it may be the case that the optimal alternative is not included in the set of representatives. So, we need to understand the relation between $\beta$ and $\gamma$, and then use it to obtain a distortion bound for $M$.

\begin{lemma} \label{lem:beta-gamma}
For any mechanism $M$, it holds that $\beta \leq 2\gamma$. 
\end{lemma}

\begin{proof}
By definition, $\beta$ is the worst-case ratio when comparing the total distance of the representatives from the outcome of $M$ to their total distance from the optimal {\em alternative}, whereas $\gamma$ is the worst-case ratio
when comparing the total distance of the representatives from the outcome of $M$ to their total distance from the optimal {\em representative}. So, to prove the statement, it suffices to show that 
\begin{align*}
\min_{j \in R} \sum_{i \in R} \delta(i,j) \leq 2 \cdot \min_{j \in A} \sum_{i \in R} \delta(i,j),
\end{align*}
for every instance that results in the set of district representatives $R$. This inequality holds trivially when there is only one district, so we assume that $k\geq 2$. To simplify notation, denote by $r \in R$ the optimal representative and by $o \in A$ the optimal alternative. For every $j \in R$, we have
$\sum_{i \in R} \delta(i,r) \leq \sum_{i \in R} \delta(i,j)$.
So, summing over all representatives, and using the triangle inequality, we obtain
\begin{align*}
\sum_{i \in R} \delta(i,r) 
&\leq \frac{1}{k} \sum_{j \in R} \sum_{i \in R} \delta(i,j) \\
&= \frac{2}{k} \sum_{i,j \in R} \delta(i,j) \\
&\leq \frac{2}{k} \sum_{i,j \in R} \bigg( \delta(i,o) + \delta(j,o) \bigg) \\
&= \frac{2(k-1)}{k} \sum_{i \in R} \delta(i,o).
\end{align*}
Thus, we have that $\beta \le \frac{2(k-1)}{k} \gamma \leq 2 \gamma$, and the proof is complete. 
\end{proof}

Due to Lemma~\ref{lem:beta-gamma}, Theorem~\ref{thm:main-composition} implies the following distortion bounds for general metric spaces and $\alpha$-in-$\gamma$-over representative-selecting mechanisms. 

\begin{theorem} \label{thm:representative-composition}
For general metric spaces and any $F,G\in \{\AVG,\MAX\}$, the $(F \circ G)$-distortion of any $\alpha$-in-$\gamma$-over representative-selecting mechanism is at most $\alpha+2\gamma+2\alpha\gamma$. 
\end{theorem}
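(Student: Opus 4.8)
The plan is to combine Theorem~\ref{thm:main-composition} with Lemma~\ref{lem:beta-gamma} in a direct, almost mechanical way. The mechanism $M$ in question is a representative-selecting mechanism that uses an in-district rule with $G$-distortion at most $\alpha$, and an over-districts rule whose distortion is measured only against the other representatives (hence guaranteeing the $\gamma$-bound of inequality~\eqref{eq:repsonly}, not the stronger bound of inequality~\eqref{eq:SS-beta-over-districts}). So I would first observe that $M$ does satisfy property~\eqref{eq:SS-alpha-in-district} (the in-district $\alpha$-guarantee) exactly as required by the proof of Theorem~\ref{thm:main-composition}, since the local aggregation step is identical to the general case. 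The only gap is that $M$ satisfies inequality~\eqref{eq:repsonly} rather than inequality~\eqref{eq:SS-beta-over-districts}.

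The key step is to convert the $\gamma$-guarantee against representatives into a $\beta$-guarantee against arbitrary alternatives. This is precisely the content of Lemma~\ref{lem:beta-gamma}, which shows $\beta \leq 2\gamma$: the worst-case ratio of the winner's total distance to the representatives against the best \emph{alternative}'s total distance is at most twice the corresponding ratio against the best \emph{representative}. I would therefore argue that $M$, viewed through this lens, behaves as an $\alpha$-in-$\beta$-over mechanism with $\beta \leq 2\gamma$. Concretely, for every instance $I$ producing representative set $R$ with winner $w$ and any alternative $j$, combining~\eqref{eq:repsonly} with the inequality $\min_{j' \in R}\sum_{i \in R}\delta(i,j') \leq 2\min_{j' \in A}\sum_{i \in R}\delta(i,j')$ established in Lemma~\ref{lem:beta-gamma} yields
\begin{align*}
\sum_{i \in R} \delta(i,w) \leq \gamma \cdot \min_{j' \in R} \sum_{i \in R} \delta(i,j') \leq 2\gamma \cdot \sum_{i \in R} \delta(i,j),
\end{align*}
which is exactly property~\eqref{eq:SS-beta-over-districts} with $\beta = 2\gamma$.

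Having established that $M$ is an $\alpha$-in-$(2\gamma)$-over mechanism in the sense of Theorem~\ref{thm:main-composition}, the final step is to invoke that theorem as a black box with $\beta = 2\gamma$. This gives an $(F \circ G)$-distortion of at most $\alpha + \beta + \alpha\beta = \alpha + 2\gamma + 2\alpha\gamma$, which is the claimed bound. The argument transfers uniformly across all four objectives because the proof of Theorem~\ref{thm:main-composition} does, and because both the in-district guarantee~\eqref{eq:SS-alpha-in-district} and the representative-level guarantee~\eqref{eq:repsonly} are stated for each $F, G \in \{\AVG, \MAX\}$.

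I expect the only real subtlety to be bookkeeping about \emph{which} reference alternative each inequality is applied to: property~\eqref{eq:SS-beta-over-districts} in the proof of Theorem~\ref{thm:main-composition} is eventually instantiated at $j = o$ (the global optimum for $F \circ G$), and I must confirm that the chain above holds for this particular $j$, not merely for the representative-optimal one. Since Lemma~\ref{lem:beta-gamma} already quantifies over all alternatives $j \in A$ and compares against the global alternative-optimum of the inner distance sum, the instantiation at $j = o$ is legitimate, and no genuinely new estimate is needed. The potential pitfall is therefore purely notational---ensuring the $\gamma$-to-$\beta$ conversion is applied before, not after, substituting the optimal alternative into the composition argument.
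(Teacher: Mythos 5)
Your proposal is correct and is essentially the paper's own proof: the paper obtains this bound in exactly the same way, by converting the over-districts $\gamma$-guarantee against representatives into a guarantee with $\beta \le 2\gamma$ against all alternatives (Lemma~\ref{lem:beta-gamma}) and then invoking Theorem~\ref{thm:main-composition} as a black box with $\beta = 2\gamma$. Your explicit chain of inequalities merely spells out the combination that the paper states in one sentence, including the bookkeeping point that the guarantee must hold for the global optimum $o$ and not just the best representative.
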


\noindent 
Using appropriate direct voting rules in and over the districts, we can now again obtain concrete upper bounds on the distortion of cardinal and ordinal representative-selecting mechanisms. In particular, for every objective $F\circ G$, similarly to the case of general mechanisms in Section~\ref{sec:metric}, there is a $1$-in-$1$-over cardinal mechanism and a $3$-in-$2$-over ordinal mechanism. Therefore, by Theorem~\ref{thm:representative-composition}, we obtain the following corollary; note that Theorem~\ref{thm:representative-extension} implies better ordinal bounds for objectives of the form $\MAX \circ G$. This corollary shows that some loss in distortion may be experienced by forcing mechanisms to select a winner from only among the set of representatives, but that loss is not too large.

\begin{corollary}
For general metric spaces and any $F, G \in \{\AVG,\MAX\}$, there is a representative-selecting mechanism with $(F \circ G)$-distortion at most $5$, and an ordinal representative-selecting mechanism with $(F \circ G)$-distortion at most $19$.
\end{corollary}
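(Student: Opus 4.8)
The plan is to instantiate Theorem~\ref{thm:representative-composition} with two specific pairs of direct voting rules, mirroring the construction used for general mechanisms in Section~\ref{sec:metric} (Corollaries~\ref{cor:main-cardinal} and~\ref{cor:main-ordinal}). The key conceptual point is that, for representative-selecting mechanisms, the over-districts rule may only output one of the representatives, so the relevant over-districts guarantee is the restricted distortion $\gamma$ (measured against the best representative) rather than $\beta$; accordingly I would invoke the bound $\alpha + 2\gamma + 2\alpha\gamma$ of Theorem~\ref{thm:representative-composition} in place of the $\alpha+\beta+\alpha\beta$ bound of Theorem~\ref{thm:main-composition}.

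For the cardinal claim, I would take the in-district rule to be the one that directly minimizes the $G$-value over all alternatives, and the over-districts rule to be the one that directly minimizes the $F$-value over the set of representatives. Because the full metric is available, both optima are computable, yielding $\alpha = 1$ and $\gamma = 1$. Substituting into Theorem~\ref{thm:representative-composition} gives $(F\circ G)$-distortion at most $1 + 2 + 2 = 5$.

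For the ordinal claim, I would employ the {\sc PluralityMatching} rule of \citet{gkatzelis2020resolving} in both steps. As an in-district rule it has $G$-distortion at most $3$, so $\alpha = 3$. In the over-districts step, the representatives act as pseudo-agents that coincide with their own positions, hence are at distance $0$ from their top-ranked alternative; in this special regime the same rule has distortion at most $2$, so $\gamma = 2$. Substituting gives $(F\circ G)$-distortion at most $3 + 4 + 12 = 19$.

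I expect no genuine obstacle here, since the technical content has been front-loaded into Theorem~\ref{thm:representative-composition} and Lemma~\ref{lem:beta-gamma} that underlies it. The only step requiring a moment of care is verifying that the distortion-$2$ guarantee of {\sc PluralityMatching} is indeed the one applicable to the over-districts step: this is legitimate because each representative is itself an alternative whose closest alternative is itself at distance $0$, which is exactly the hypothesis under which that improved bound holds. Finally, I would remark that for objectives of the form $\MAX \circ G$ the ordinal bound of $19$ is superseded by the stronger bound of $5$ from Theorem~\ref{thm:representative-extension}.
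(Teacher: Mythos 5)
Your proposal is correct and matches the paper's own argument exactly: the paper likewise instantiates Theorem~\ref{thm:representative-composition} with a $1$-in-$1$-over cardinal mechanism (optimal rules in both steps) and a $3$-in-$2$-over ordinal mechanism (\textsc{PluralityMatching} in both steps, with $\gamma=2$ justified by the representatives being at distance $0$ from their top choice), yielding $1+2+2=5$ and $3+4+12=19$. Even your closing remark about $\MAX\circ G$ being superseded by the bound of $5$ appears in the paper's surrounding text.
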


\subsection{More General Objectives} \label{sec:generalization}

We now consider again mechanisms that can choose the final winner from the set of all alternatives, and discuss how some of our results (in particular, Theorems~\ref{thm:main-composition} and~\ref{thm:max-of-g-line-alpha}) can be extended for objectives $F\circ G$ beyond the cases where $F, G \in \{\AVG,\MAX\}$.

\subsubsection{Generalizing Theorem~\ref{thm:main-composition}} \label{sec:generalization1}

We previously showed in Theorem \ref{thm:main-composition} that the $(F\circ G)$-distortion of distributed mechanisms can be bounded in terms of the $F$- and $G$-distortion of the voting rules used in and over the districts. Here, we show that this theorem still holds for a much more general class of functions. To define this properly, we should think of $F$ and $G$ as functions that take as input vectors of distances. More precisely, given an instance $I=(N,A,D,\delta)$, let $f$ and $g$ be functions so that the {\em cost} of any alternative $j \in A$ is 
\begin{align*}
(F\circ G)(j|I) = f \bigg( g \Big( \vec{\delta}_{1}(j) \Big), \ldots, g \Big( \vec{\delta}_{k}(j) \Big) \! \bigg),    
\end{align*} 
where $\vec{\delta}_{d}(j)$ is the vector consisting of the distances $\delta(i,j)$ between every agent $i \in N_d$ and alternative $j$. To give a few examples, $g\Big(\vec{\delta}_{d}(j)\Big)=\frac{1}{n_d}\sum_{i\in N_d} \delta(i,j)$ if $G=\AVG$, and $g\Big(\vec{\delta}_{d}(j)\Big)=\max_{i\in N_d} \delta(i,j)$ if $G=\MAX$. More generally, we consider functions $f$ and $g$ which satisfy the following properties:
\begin{itemize}
\item {\bf Monotonicity}: 
A function $f$ is {\em monotone} if $f(\vec{v})\le f(\vec{u})$, for any two vectors $\vec{v}$ and $\vec{u}$  such that $v_\ell \leq u_\ell$ for every index $\ell$.

\item {\bf Subadditivity}:
A function $f$ is {\em subadditive} if $f(\vec{v}+\vec{u})\le f(\vec{v})+f(\vec{u})$, for any two vectors $\vec{v}$ and $\vec{u}$. Moreover, for any scalar $c \ge 1$, it must be that $f(c \cdot \vec{v})\leq c \cdot f(\vec{v})$.\footnote{The latter condition, sometimes known as sub-homogeneity, is not usually included in the standard definition of subadditive functions. It is easily implied by the first subadditivity condition when $c$ is an integer.}

\item {\bf Consistency}: 
A function $f$ is {\em consistent} if $f(\vec{v}) = c$, for any vector $\vec{v}$ such that $v_\ell = c$ for every index $\ell$.
\end{itemize}
Note that both $\AVG$ and $\MAX$, as well as many other functions, obey all of the above properties. 

\begin{theorem}\label{thm:general-composition} 
The distortion of any $\alpha$-in-$\beta$-over mechanism is at most $\alpha+\beta+\alpha\beta$, for any objective $F\circ G$ defined by functions $f$ and $g$ which are monotone, subadditive, and consistent.
\end{theorem}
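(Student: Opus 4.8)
The plan is to mimic the concrete proof of Theorem~\ref{thm:main-composition} for $\AVG \circ \AVG$, but to replace every step that used a specific property of the arithmetic mean or maximum with an appeal to one of the three abstract properties (monotonicity, subadditivity, consistency) of $f$ and $g$. As before, fix an $\alpha$-in-$\beta$-over mechanism $M$ and an arbitrary instance $I=(N,A,D,\delta)$; let $w=M(I)$, let $o$ be optimal, and let $y_d$ denote the representative of district $d$. The in-district guarantee gives $g(\vec{\delta}_d(y_d)) \le \alpha \cdot g(\vec{\delta}_d(j))$ for every $j$ and $d$, and the over-districts guarantee (applied to the representatives as pseudo-agents) gives $f(\vec{\delta}_R(w)) \le \beta \cdot f(\vec{\delta}_R(j))$, where $\vec{\delta}_R(j)$ is the vector of distances $\delta(y_d,j)$ across districts.

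The key idea is that the triangle inequality is a statement about individual distances, so it interacts cleanly with monotone, subadditive functions applied to distance vectors. First I would write, componentwise, $\delta(i,w) \le \delta(i,y_d) + \delta(y_d,w)$ for $i\in N_d$; this is a vector inequality $\vec{\delta}_d(w) \le \vec{\delta}_d(y_d) + \vec{c}_d$, where $\vec{c}_d$ is the constant vector with entries $\delta(y_d,w)$. Applying $g$, then using monotonicity, subadditivity, and finally consistency on the constant vector, yields
\begin{align*}
g\big(\vec{\delta}_d(w)\big) \le g\big(\vec{\delta}_d(y_d)\big) + g\big(\vec{c}_d\big) = g\big(\vec{\delta}_d(y_d)\big) + \delta(y_d,w).
\end{align*}
Collecting these $k$ scalar bounds into a vector inequality over districts and applying $f$ with the same three properties, the cost $(F\circ G)(w|I)=f\big(g(\vec{\delta}_1(w)),\dots,g(\vec{\delta}_k(w))\big)$ splits into a term bounded by $f$ of the in-district values $g(\vec{\delta}_d(y_d))$ and a term bounded by $f$ of the over-districts distances $\delta(y_d,w)$; the latter is exactly $f(\vec{\delta}_R(w))$, which the over-districts rule controls by $\beta \cdot f(\vec{\delta}_R(o))$. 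This reproduces the two-term decomposition from the original proof.

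From here the argument is essentially the same bookkeeping as before: I apply the in-district bound $g(\vec{\delta}_d(y_d)) \le \alpha \cdot g(\vec{\delta}_d(o))$ (via monotonicity of $f$) to turn the first term into $\alpha \cdot (F\circ G)(o|I)$, and then I bound the $\beta$-term by triangulating $\delta(y_d,o) \le \delta(i,y_d)+\delta(i,o)$, applying $g$ and the in-district bound a second time to get $g(\vec{\delta}_R(o)) \le (1+\alpha)\cdot g(\vec{\delta}_d(o))$-type estimates, and finally invoking monotonicity and the sub-homogeneity clause $f(c\cdot\vec v)\le c\cdot f(\vec v)$ (with the scalars $\beta$ and $1+\alpha$, both at least $1$) to produce a $(\beta+\alpha\beta)\cdot(F\circ G)(o|I)$ contribution. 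Summing gives the claimed $\alpha+\beta+\alpha\beta$.

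The main obstacle is bookkeeping the abstract function applications in the right order so that each of the three properties is invoked where it is actually valid. In particular, subadditivity lets me split $f$ and $g$ across a sum of two vectors but does not distribute over scalars unless the scalar is at least $1$, which is exactly why the sub-homogeneity condition is stated separately and why it matters that $\alpha,\beta\ge 1$; I must make sure every scaling step uses a constant that is at least one. Consistency is needed precisely at the points where the original proof silently replaced $g$ or $f$ of a constant vector (coming from a single over-districts distance) by that constant, and monotonicity is the glue that lets me apply $f$ and $g$ to the triangle-inequality and distortion bounds, which hold componentwise. None of the individual steps is hard, but I must verify that the specific functions $\AVG$ and $\MAX$ (and indeed all admissible $f,g$) genuinely satisfy the sub-homogeneity clause, since it is slightly stronger than plain subadditivity.
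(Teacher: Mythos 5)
Your proposal is correct and follows essentially the same route as the paper's own proof: both decompose the cost of $w$ via the componentwise triangle inequality using monotonicity, subadditivity, and consistency of $g$ and $f$, invoke the over-districts $\beta$-guarantee on the representatives viewed as pseudo-agents, triangulate $\delta(y_d,o)\le\delta(i,y_d)+\delta(i,o)$ together with the in-district $\alpha$-guarantee, and use the sub-homogeneity clause for the scalings by $\alpha$, $\beta$, and $1+\alpha$. The only difference is bookkeeping order --- you apply the $\alpha$-bound separately to each of the two terms, yielding $\alpha+\beta(1+\alpha)$, whereas the paper first collects a $(1+\beta)$ coefficient on the representatives' cost and applies the $\alpha$-bound once, yielding $(1+\beta)\alpha+\beta$ --- which is the same arithmetic.
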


\begin{proof}
Consider an arbitrary $\alpha$-in-$\beta$-over mechanism $M$ and an arbitrary instance $I=(N,A,D,\delta)$. 
Denote by $y_d$ the representative of each district $d\in D$, by $w$ the final winner, and by $o$ the optimal alternative. In addition, for each $d \in D$, let $\vec{t}_d(j)$ be a vector of size $n_d$ with every element being equal to the distance $\delta(y_d,j)$.
Then, the cost of $w$ is
\begin{align}
(F\circ G)(w|I) 
&=f \bigg( g \Big( \vec{\delta}_{1}(w) \Big), \ldots, g \Big( \vec{\delta}_{k}(w) \Big)\!\bigg) \nonumber \\
&\le f \bigg( g \Big( \vec{\delta}_{1}(y_1)+\vec{t}_1(w) \Big), \ldots, g\Big( \vec{\delta}_{k}(y_k)+\vec{t}_k(w) \Big) \! \bigg) \nonumber \\
&\le f \bigg( g \Big( \vec{\delta}_1(y_1) \Big), \ldots, g\Big( \vec{\delta}_k(y_k) \Big) \! \bigg) 
+ f \bigg( g \Big( \vec{t}_1(w) \Big), \ldots, g \Big( \vec{t}_k(w) \Big) \! \bigg) \nonumber \\
&= f \bigg( g \Big( \vec{\delta}_1(y_1) \Big), \ldots, g\Big( \vec{\delta}_k(y_k) \Big) \! \bigg) 
+ f \bigg( \delta(y_1,w), \ldots, \delta(y_k,w) \! \bigg), \label{eq:generalization-1}
\end{align}
where the first inequality follows by the monotonicity of $f$ and $g$ together with the triangle inequality, the second inequality is due to subadditivity, and the last equality is due to the consistency of $g$.

Since $M$ is $\alpha$-in-$\beta$-over, we have that the cost of $w$ if the representatives were pseudo-agents is at most a factor $\beta$ worse than the cost of the optimal alternative $o$. That is, 
\begin{align*}
f \bigg( \delta(y_1,w), \ldots, \delta(y_k,w) \! \bigg) 
&\le \beta \cdot f \bigg( \delta(y_1,o), \ldots, \delta(y_k,o) \! \bigg) \\
&= \beta \cdot f \bigg( g \Big( \vec{t}_1(o) \Big), \ldots, g \Big( \vec{t}_k(o) \Big) \! \bigg).
\end{align*}
Since $\delta(y_d,o) \le \delta(y_d,i)+\delta(i,o)$ for each agent $i \in N_d$, and the fact that the functions are monotone and subadditive, we obtain
\begin{align*}
f \bigg( g \Big( \vec{t}_1(o) \Big), \ldots, g \Big( \vec{t}_k(o) \Big) \! \bigg) 
&\le f \bigg( g \Big( \vec{\delta}_1(y_1) + \vec{\delta}_1(o) \Big), \ldots, g\Big( \vec{\delta}_k(y_k)+\vec{\delta}_k(o) \Big) \! \bigg) \\
&\le f \bigg( g\Big( \vec{\delta}_1(y_1) \Big), \ldots, g\Big(\vec{\delta}_k(y_k)\Big) \! \bigg) 
+ f \bigg( g \Big( \vec{\delta}_1(o) \Big), \ldots, g \Big( \vec{\delta}_k(o) \Big) \! \bigg).
\end{align*}
Using the above inequalities, \eqref{eq:generalization-1} yields
\begin{align}
(F\circ G)(w|I) 
&\le (1+\beta) \cdot f \bigg( g\Big( \vec{\delta}_1(y_1) \Big), \ldots, g\Big(\vec{\delta}_k(y_k)\Big) \! \bigg) + \beta \cdot f \bigg( g \Big( \vec{\delta}_1(o) \Big), \ldots, g \Big( \vec{\delta}_k(o) \Big) \! \bigg) \nonumber \\
&= (1+\beta) \cdot f \bigg( g\Big( \vec{\delta}_1(y_1) \Big), \ldots, g\Big(\vec{\delta}_k(y_k)\Big) \! \bigg) 
+ \beta \cdot (F\circ G)(o|I). \label{eq:generalization-2}
\end{align}

Since the in-district voting rule used by $M$ has $G$-distortion at most $\alpha$, we have that, for each district $d$, the cost of $y_d$ is at most $\alpha$ times the cost of $o$ for the agents in $d$, that is, 
$ g \Big( \vec{\delta}_d(y_d) \Big) \le \alpha\cdot g \Big( \vec{\delta}_d(o) \Big)$. Combining this with the second subadditivity property of $f$, we obtain
\begin{align*}
f \bigg( g \Big(\vec{\delta}_1(y_1)), \ldots, g \Big( \vec{\delta}_k(y_k) \Big) \! \bigg)
\le \alpha\cdot f \bigg( g \Big(\vec{\delta}_1(o)), \ldots, g \Big( \vec{\delta}_k(o) \Big) \! \bigg)
= \alpha \cdot (F\circ G)(o|I).
\end{align*}
By this, inequality \eqref{eq:generalization-2} now gives the desired upper bound of $\alpha + \beta + \alpha \beta$ on the distortion of $M$. 
\end{proof}

\subsubsection{Generalizing Theorem~\ref{thm:max-of-g-line-alpha}}
In this section we will show that, when the metric space is a line, Theorem \ref{thm:max-of-g-line-alpha} holds for more general objectives of the form $\MAX\circ G$. In particular, we are aiming to minimize the maximum cost of any district, which for an alternative $j \in A$ is given by some function $g\Big(\vec{\delta}_d(j)\Big)$ of the vector $\vec{\delta}_d(j)$ containing the distances between the members of $d$ and $j$. As in Section~\ref{sec:generalization1}, we will assume that $g$ is monotone, subadditive, and consistent. In addition, we require that $g$ is {\em single-peaked}, that is, for any district $d$, there is a there is a unique alternative $j$ that minimizes $g\Big(\vec{\delta}_d(j)\Big)$, and $g$ increases monotonically as we move further away from the location of $j$ (to the left or the right). It is easy to see that many functions, including $\AVG$ and $\MAX$, obey these properties.

As in Section~\ref{sec:line-cardinal}, the upper bound on the distortion is due to the $\lambda$-ARL mechanism, which chooses the representative of each district to be the rightmost $\lambda$-acceptable alternative for the district, and then outputs the leftmost representative as the final winner. Recall that the set of $\lambda$-acceptable alternatives for a district $d$ contains all the alternatives $x$ such that $g\Big(\vec{\delta}_d(x)\Big) \leq \lambda \cdot \min_{j \in A} g\Big(\vec{\delta}_d(j)\Big)$.

\begin{theorem} \label{thm:alpha-ARL-generalization}
The distortion of $\lambda$-ARL is at most $\max\left\{ 2 + \frac{1}{\lambda}, \lambda \right\}$ for any objective of the form $\MAX \circ G$, where $G$ is implemented by a monotone, subadditive, consistent, and single-peaked function $g$.
\end{theorem}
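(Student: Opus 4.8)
The plan is to mirror the two-case proof of Theorem~\ref{thm:max-of-g-line-alpha}, replacing every step that exploited the explicit averaging form of $\AVG$ with the corresponding abstract property of $g$. Throughout, I write $g\big(\vec{\delta}_d(j)\big)$ for the cost of alternative $j$ in district $d$, and I let $w$ be the winner of $\lambda$-ARL and $o$ an optimal alternative. The two translations that do all the work are: (i) \emph{consistency}, which lets me rewrite any distance $\delta(p,q)$ as $g$ applied to the constant vector all of whose entries equal $\delta(p,q)$; and (ii) the combination of \emph{monotonicity} (applied componentwise through the triangle inequality) and \emph{subadditivity}, which lets me split $g\big(\vec{\delta}_d(p)\big) \le g\big(\vec{\delta}_d(q)\big) + \delta(p,q)$ exactly as the averaging computations did.

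First I would handle Case 1, where $w$ lies to the left of $o$. Letting $d^*$ be the district of maximum cost for $w$, monotonicity, the triangle inequality, subadditivity, and consistency yield $(\MAX \circ G)(w|I) \le (\MAX \circ G)(o|I) + \delta(w,o)$, just as before. To bound $\delta(w,o)$, let $d_w$ be the district represented by $w$ and let $o_w$ minimize $g\big(\vec{\delta}_{d_w}(\cdot)\big)$. Since $o_w$ is trivially $\lambda$-acceptable (its ratio is $1 \le \lambda$) and $w$ is the rightmost $\lambda$-acceptable alternative for $d_w$, we get $o_w \le w \le o$, hence $\delta(w,o) \le \delta(o_w,o)$. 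Consistency rewrites $\delta(o_w,o)$ as $g$ of a constant vector, the triangle inequality plus monotonicity and subadditivity bound it by $g\big(\vec{\delta}_{d_w}(o)\big) + g\big(\vec{\delta}_{d_w}(o_w)\big)$, and the fact that $o$ is \emph{not} $\lambda$-acceptable for $d_w$ (it sits strictly right of the rightmost acceptable alternative) gives $g\big(\vec{\delta}_{d_w}(o_w)\big) < \tfrac{1}{\lambda}\, g\big(\vec{\delta}_{d_w}(o)\big)$. This produces $\delta(w,o) \le \big(1+\tfrac1\lambda\big)(\MAX\circ G)(o|I)$ and a distortion bound of $2 + \tfrac1\lambda$ in this case.

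The heart of the generalization is Case 2, where $w$ is to the right of $o$: here the averaging proof invoked the folklore Lemma~\ref{lem:sum-monotonicity}, and its exact abstract replacement is the \emph{single-peakedness} of $g$. Letting $d^*$ again be the maximum-cost district for $w$, with representative $y^*$ and district-minimizer $o^*$, single-peakedness guarantees that $g\big(\vec{\delta}_{d^*}(\cdot)\big)$ decreases monotonically toward $o^*$ from either side, precisely what Lemma~\ref{lem:sum-monotonicity} supplied for the sum. If $w$ is not $\lambda$-acceptable for $d^*$, then since $w$ is the leftmost representative $y^*$ sits to its right, so the configuration $o^* < w < y^*$ would force $g\big(\vec{\delta}_{d^*}(w)\big) \le g\big(\vec{\delta}_{d^*}(y^*)\big) \le \lambda\, g\big(\vec{\delta}_{d^*}(o^*)\big)$, contradicting non-acceptability; hence $o < w < o^*$, and single-peakedness gives $g\big(\vec{\delta}_{d^*}(w)\big) \le g\big(\vec{\delta}_{d^*}(o)\big)$, so $w$ is no worse than $o$. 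If instead $w$ is $\lambda$-acceptable for $d^*$, then $g\big(\vec{\delta}_{d^*}(w)\big) \le \lambda\, g\big(\vec{\delta}_{d^*}(o^*)\big) \le \lambda\, g\big(\vec{\delta}_{d^*}(o)\big)$, bounding the distortion by $\lambda$. Combining the cases gives $\max\{2+\tfrac1\lambda,\, \lambda\}$.

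The main obstacle is essentially bookkeeping: verifying that each arithmetic manipulation of the original averaging proof corresponds to a legitimate application of exactly one of the four properties, and in particular that single-peakedness is strong enough to stand in for Lemma~\ref{lem:sum-monotonicity} at every point where monotone movement toward the district optimum was invoked. No genuinely new idea is needed beyond this dictionary; the only place demanding care is ensuring that the constant-vector rewrites via consistency and the componentwise triangle-inequality-plus-monotonicity steps are applied to vectors of the correct length $n_d$ within each district.
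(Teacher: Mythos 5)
Your proposal is correct and follows essentially the same route as the paper's own proof: the identical two-case split on the relative positions of $w$ and $o$, the same triangle-inequality decomposition via monotonicity, subadditivity, and consistency in Case~1 (including the two observations about $o_w$ being trivially $\lambda$-acceptable and $o$ failing $\lambda$-acceptability), and the same use of single-peakedness as the abstract substitute for Lemma~\ref{lem:sum-monotonicity} in Case~2. No substantive difference to report.
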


\begin{proof}
Consider an arbitrary instance $I = (N,A,D,\delta)$, where $\delta$ is a line metric. Let $w$ be the alternative chosen by $\lambda$-ARL when given $I$ as input, and denote by $o$ the optimal alternative. We consider the following two cases depending on the relative positions of $w$ and $o$.

\medskip

\noindent 
{\bf Case 1: $w$ is to the left of $o$.} 
Let $d^* \in \arg\max_{d \in D} g\Big(\vec{\delta}_d(w)\Big)$ be the district that gives the max cost for $w$.  Also, for every $d \in D$, let $\vec{\delta}_d(w,o)$ be a vector of size $n_d$ with all its entries equal to $\delta(w,o)$.
By the triangle inequality, the monotonicity, subadditivity and consistency of $g$, as well as the fact that $(\MAX\circ G)(o|I) \geq g\Big(\vec{\delta}_d(o)\Big)$ for any district $d$, we have
\begin{align*}
(\MAX \circ G)(w|I) 
= g\Big(\vec{\delta}_{d^*}(w)\Big) 
&\leq g\Big(\vec{\delta}_{d^*}(o)+\vec{\delta}_{d^*}(w,o)\Big) \\
&\leq g\Big(\vec{\delta}_{d^*}(o)\Big)  + g\Big(\vec{\delta}_{d^*}(w,o)\Big)\\
&\leq  (\MAX\circ G)(o|I) + \delta(w,o).
\end{align*}
Now, let $d_w$ be the district whose representative is $w$. In other words, $w$ is the rightmost $\lambda$-acceptable alternative for $d_w$. Let $o_w$ be the alternative with minimum cost according to $g$ for the agents in $d_w$. We make the following two observations:
\begin{itemize}
\item 
Since $o_w$ is trivially $\lambda$-acceptable, it must be the case that she is located to the left of $w$, and thus $\delta(w,o) \leq \delta(o_w,o)$.

\item 
Since $o$ is to the right of $w$ and is not the representative of $d_w$, it must be the case that she is not $\lambda$-acceptable for $d_w$, that is, $g\Big(\vec\delta_{d_w}(o)\Big) > \lambda \cdot g\Big(\vec\delta_{d_w}(o_w)\Big)$. 
\end{itemize}
Combining these two observations together with the triangle inequality and the properties of $g$ (monotonicity, subadditivity, and consistency), we obtain
\begin{align*}
\delta(w,o)
&\leq \delta(o,o_w)
= g\Big(\vec{\delta}_{d_w}(o,o_w)\Big) \\
&\leq g\Big(\vec{\delta}_{d_w}(o) + \vec{\delta}_{d_w}(o_w)\Big) \\
&\leq g\Big(\vec{\delta}_{d_w}(o)\Big) + g\Big(\vec{\delta}_{d_w}(o_w)\Big) \\
&\leq \bigg( 1 + \frac{1}{\lambda} \bigg) \cdot g(\vec{\delta}_{d_w}(o)) \\
&\leq \bigg( 1 + \frac{1}{\lambda} \bigg) \cdot (\MAX \circ G)(o|I).
\end{align*}
Putting everything together, we obtain a distortion of at most $2 + \frac{1}{\lambda}$.

\medskip

\noindent 
{\bf Case 2: $w$ is to the right of $o$.} 
As in the previous case, let $d^*$ be the district with maximum cost for $w$. Denote by $y^*$ the representative of $d^*$, and by $o^*$ the optimal alternative for $d^*$ that minimizes the cost of the agents in $d^*$ according to $g$. We consider the following two subcases:
\begin{itemize}
\item 
{\bf $w$ is not $\lambda$-acceptable for $d^*$.} 
The fact that $w$ is the leftmost district representative implies that $y^*$ is to the right of $w$. 
If $w$ is between $o^*$ and $y^*$, then the fact that $g$ is single-peaked would yield that
\begin{align*}
g\Big(\vec{\delta}_{d^*}(w)\Big) \leq g\Big(\vec{\delta}_{d^*}(y^*)\Big) \leq \lambda \cdot g\Big(\vec{\delta}_{d^*}(o^*)\Big),
\end{align*}
thus contradicting that $w$ is not $\lambda$-acceptable for $d^*$ in this subcase. Hence, it must be the case that $w$ is between $o$ and $o^*$, and since $g$ is single-peaked, we have that
\begin{align*}
(\MAX \circ G)(w|I) = g\Big(\vec{\delta}_{d^*}(w)\Big) \leq g\Big(\vec{\delta}_{d^*}(o)\Big) \leq (\MAX \circ G)(o|I).
\end{align*}

\item {\bf $w$ is $\lambda$-acceptable for $d^*$.} In this case, we clearly have that 
\begin{align*}
(\MAX \circ G)(w|I)
&= g\Big(\vec{\delta}_{d^*}(w)\Big) \\
&\leq \lambda \cdot g\Big(\vec{\delta}_{d^*}(o^*)\Big) \\
&\leq \lambda \cdot g\Big(\vec{\delta}_{d^*}(o)\Big) \\
&\leq \lambda \cdot (\MAX \circ G)(o|I).
\end{align*} 
\end{itemize}
Putting everything together, we obtain an upper bound of $\max\left\{2 + \frac{1}{\lambda}, \lambda\right\}$.
\end{proof}

By optimizing over the parameter $\lambda$, we obtain the following generalization of Corollary~\ref{cor:max-of-g-line}.

\begin{corollary} \label{cor:max-of-g-line-generalization}
The distortion of ($1+\sqrt{2}$)-ARL is at most $1+\sqrt{2}$ for any objective of the form $\MAX \circ G$, where $G$ is defined by a monotone, subadditive, consistent, and single-peaked function $g$.
\end{corollary}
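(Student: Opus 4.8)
The plan is to obtain this corollary as an immediate consequence of Theorem~\ref{thm:alpha-ARL-generalization}, which already establishes that for any monotone, subadditive, consistent, and single-peaked $g$, the $(\MAX \circ G)$-distortion of $\lambda$-ARL is at most $\max\{2+\tfrac{1}{\lambda},\lambda\}$ on the line. Since that theorem does all the geometric and structural work (the case analysis on the relative positions of $w$ and $o$, the use of single-peakedness to rule out the ``$w$ between $o^*$ and $y^*$'' configuration, and the triangle-inequality bounds via the properties of $g$), nothing remains except to choose the parameter $\lambda$ that minimizes the guaranteed bound. So the entire proof reduces to a one-variable optimization.

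Concretely, I would minimize $h(\lambda) = \max\{2+\tfrac{1}{\lambda},\lambda\}$ over $\lambda \geq 1$. The first branch $2+\tfrac{1}{\lambda}$ is strictly decreasing in $\lambda$, while the second branch $\lambda$ is strictly increasing; hence the pointwise maximum is minimized exactly where the two branches are equal. Setting $\lambda = 2+\tfrac{1}{\lambda}$ gives the quadratic $\lambda^2 - 2\lambda - 1 = 0$, whose unique root with $\lambda \geq 1$ is $\lambda = 1+\sqrt{2}$.

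It then suffices to verify that at $\lambda = 1+\sqrt{2}$ both branches indeed coincide with $1+\sqrt{2}$. The increasing branch gives $\lambda = 1+\sqrt{2}$ directly, and for the decreasing branch one rationalizes
\begin{align*}
2 + \frac{1}{1+\sqrt{2}} = 2 + \frac{\sqrt{2}-1}{(1+\sqrt{2})(\sqrt{2}-1)} = 2 + \frac{\sqrt{2}-1}{1} = 1 + \sqrt{2}.
\end{align*}
Applying Theorem~\ref{thm:alpha-ARL-generalization} with this value of $\lambda$ yields a $(\MAX \circ G)$-distortion of at most $1+\sqrt{2}$, which is the claim.

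There is essentially no obstacle here beyond bookkeeping: the only point meriting a sentence of justification is \emph{why} equalizing the two branches is optimal (namely the opposite monotonicities, so that decreasing $\lambda$ below $1+\sqrt{2}$ makes the first branch the binding one and increases $h$, while increasing $\lambda$ above $1+\sqrt{2}$ makes the second branch binding and likewise increases $h$). I would state that monotonicity argument explicitly rather than merely asserting the crossing point is the minimizer, since it is what rules out a better choice of $\lambda$. Everything else is the routine verification above.
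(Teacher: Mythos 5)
Your proposal is correct and is exactly the paper's own argument: the paper derives this corollary from Theorem~\ref{thm:alpha-ARL-generalization} simply ``by optimizing over the parameter $\lambda$,'' which is the one-variable minimization of $\max\{2+\tfrac{1}{\lambda},\lambda\}$ you carry out, yielding $\lambda = 1+\sqrt{2}$. Your added justification of why the crossing point of the two branches is the minimizer is a fine (if routine) elaboration of what the paper leaves implicit.
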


\section{Open Problems} \label{sec:open}
In this paper, we showed bounds on the distortion of single-winner distributed mechanisms for many different objectives, some of which are novel and make sense only in this particular setting. Still, there are several challenging open questions, as well as new directions for future research. Starting with our results, it would be interesting to close the gaps between the lower and upper bounds presented in Table~\ref{table:results} for the various scenarios we considered. For cases where our bounds for general metrics and the line differ significantly, such as for ordinal mechanisms and the $\AVG \circ \MAX$ objective, one could focus on other well-structured metrics, like the Euclidean space or generalizations of it. 

Since we focused exclusively on deterministic mechanisms, a possible direction could be to consider randomized mechanisms and investigate whether better distortion bounds are possible. Note that our composition theorem (Theorem~\ref{thm:main-composition} and its variants) already provide randomized bounds by plugging in appropriate randomized in-district and over-districts direct voting rules. However, these bounds seem extremely loose, and different techniques are required to obtain tight bounds. Going beyond the single-winner setting, one could study the distortion of distributed mechanisms that output committees of a given number of alternatives, or rankings of all alternatives. Finally, another interesting direction would be to study what happens when agents act strategically, and either understand how this behavior affects given distributed mechanisms, or aim to design strategyproof mechanisms that are resilient to manipulation and at the same time achieve low distortion.

\bibliographystyle{plainnat}
\bibliography{references}

\end{document}